\theoremstyle{plain}
\newtheorem{theorem}{Theorem}
\newtheorem{observation}[theorem]{Observation}
\newtheorem{lemma}[theorem]{Lemma}
\newtheorem*{thm*}{Theorem}
\newtheorem{proposition}[theorem]{Proposition}
\newtheorem*{eth-env}{Exponential Time Hypothesis}
\newcommand{\bip}{\textsc{Minimum Crossing Bipartite Matching}\xspace}
\def\msi{\textsc{Multicolored Subgraph Isomorphism}\xspace}
\def\tok{\textsc{Token Swapping}\xspace}
\def\ctok{\textsc{Colored Token Swapping}\xspace}
\def\stok{\textsc{Subset Token Swapping}\xspace}
\def\ts{\textsc{TS}\xspace}
\def\cts{\textsc{CTS}\xspace}
\def\sts{\textsc{STS}\xspace}
\def\sat{\textsc{3-Sat}\xspace}
\def\ecover{\textsc{Exact Cover by 3-Sets}\xspace}
\def\dm{\textsc{3-Dimensional Matching}\xspace}
\def\yes{\textsc{Yes}}
\def\no{\textsc{No}}
\newtheorem{claim}{Claim}
\newenvironment{inproof}[1]{\noindent {\bf Proof of Claim #1.}}{\hfill$\blacksquare$ \medskip}
\def\U{\textrm{U}\xspace}
\def\loc{\textrm{local}\xspace}
\def\glo{\textrm{global}\xspace}
\def\fpt{FPT\xspace}
\def\wone{$W[1]$\xspace}
\def\s{\boldsymbol{s}}
\DeclareMathOperator{\diam}{diam}
\DeclareMathOperator{\dist}{dist}
\DeclareMathOperator{\tw}{tw}
\DeclareMathOperator{\td}{td}
\newcommand{\xyswap}[2]{$(#1/#2)$-swap\xspace}
\newcommand{\xyswaps}[2]{$(#1/#2)$-swaps\xspace}
 \title{Complexity of Token Swapping and its Variants\footnote{Supported by the ERC grant PARAMTIGHT: ``Parameterized complexity and the search for tight complexity results'', no. 280152.}}
 \author[1]{\'Edouard Bonnet\thanks{\texttt{edouard.bonnet@dauphine.fr}}}
 \author[1]{Tillmann Miltzow\thanks{\texttt{t.miltzow@gmail.com}}}
 \author[1,2]{Pawe\l{} Rz\polhk{a}\.zewski \thanks{\texttt{p.rzazewski@mini.pw.edu.pl}}}
 \affil[1]{Institute for Computer Science and Control,
 Hungarian Academy of Sciences (MTA SZTAKI)}
 \affil[2]{Faculty of Mathematics and Information Science,
 		Warsaw University of Technology}		
\date{}
\begin{document}

\maketitle

\begin{abstract}
In the \textsc{Token Swapping} problem we are given a graph with a token placed on each vertex. 
Each token has exactly one destination vertex, and we try to move all the tokens to their destinations, using the minimum number of swaps, i.e., operations of exchanging the tokens on two adjacent vertices.
As the main result of this paper, we show that \textsc{Token Swapping} is \wone-hard parameterized by the length $k$ of a shortest sequence of swaps. 
In fact, we prove that, for any computable function $f$, it cannot be solved in time $f(k)n^{o(k / \log k)}$ where $n$ is the number of vertices of the input graph, unless the ETH fails.
This lower bound almost matches the trivial $n^{O(k)}$-time algorithm.

We also consider two generalizations of the \textsc{Token Swapping}, namely {\sc Colored Token Swapping} (where the tokens have colors and tokens of the same color are indistinguishable), and {\sc Subset Token Swapping} (where each token has a set of possible destinations). 
To complement the hardness result, we prove that even the most general variant, {\sc Subset Token Swapping}, is FPT in nowhere-dense graph classes.

Finally, we consider the complexities of all three problems in very restricted classes of graphs: graphs of bounded treewidth and diameter, stars, cliques, and paths, trying to identify the borderlines between polynomial and NP-hard cases.
\end{abstract}

\section{Introduction}\label{sec:intro}

In reconfiguration problems, we are interested to transform
a combinatorial or geometric object from one state to another, by performing a sequence of simple operations.
An important example is motion planning, where we want to move an object from one configuration to another. Elementary operations are usually translations and rotations. It turns out that motion planning can be reduced to the shortest path problem in some higher dimensional Euclidean space with obstacles~\cite{de2000computational}.

Finding the shortest flip sequence between any two triangulations of a convex polygon is a major open problem in computational geometry. 
Interestingly it is equivalent to a myriad of other reconfiguration problems of so-called Catalan structures~\cite{bose2009flips}.
Examples include: binary trees, perfect matchings of points in convex position, Dyck words, monotonic lattice paths, and many more.
Reconfiguring permutations under various constraints is heavily studied and usually called \emph{sorting}.

An important class of reconfiguration problems is a big family of problems in graph theory that involves 
moving tokens, pebbles, cops or robbers along the edges of a given graph, in order to reach some final configuration~\cite{WILSON197486,Calinescu2006,Fabila-Monroy2012,Graf2015, berlekamp1982winning, parsons1978pursuit, doi:10.1137/0208046, Demaine2015132, FoxEpsteinISAAC15}.
In this paper, we study one of them.

The \tok problem, introduced by Yamanaka {\em et al.} \cite{Yamanaka2014}, fits nicely into this long history of reconfiguration problems and can be regarded as a sorting 
problem with special constraints.

\begin{figure}[htbp]
 \centering
 \includegraphics{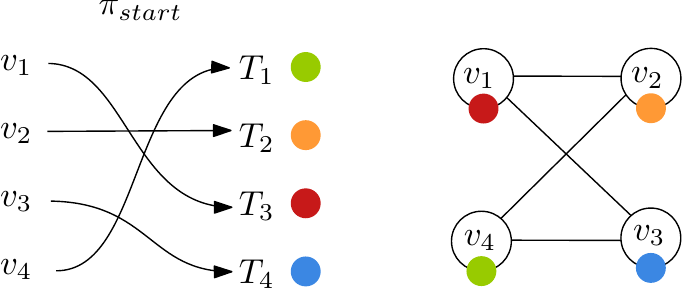}
 \caption{Every token placement can be uniquely described by a permutation.}
 \label{fig:Definition}
\end{figure}

The problem is defined as follows, see also Figure~\ref{fig:Definition}.
We are given an undirected connected graph with $n$ vertices $v_1,\dots,v_n$, a set of tokens $T = \{t_1,\ldots,t_n\}$ and 
two permutations $\pi_{\text{start}}$ and $\pi_{\text{target}}$.
These permutations are called start permutation and target permutation, respectively.
Initially vertex $v_i$ holds token $t_{\pi_{\text{start}}(i)}$.
In one step, we are allowed to \emph{swap} tokens on a pair of adjacent vertices, 
that is, if $v$ and $w$ are adjacent, $v$ holds the token $s$, and
$w$ holds the token $t$, then the swap between $v$ and $w$ results
in the configuration where $v$ holds $t$, $w$ holds $s$, and
all the other tokens stay in place.
The \tok problem asks if the target configuration can be reached in at most $k$ swaps. Thus, a solution for \tok is a sequence of edges, where the swaps take place. The solution is optimal if its length is shortest possible.
To see the correspondence to sorting note that every placement of tokens can be regarded as a permutation and the target permutation can be regarded as the sorted state.

Yamanaka {\em et al.} \cite{Yamanaka2014} observed that every instance of \tok has a solution, and its length is $O(n^2)$. Moreover, $\Omega(n^2)$ swaps are sometimes necessary. 
It is interesting to note that although the problem in its full generality was introduced only recently~\cite{Yamanaka2014}, some special cases were studied before in the context of sorting permutations with additional restrictions (see Knuth~\cite[Section 5.2.2]{Knuth} for paths, Pak \cite{PAK1999} for stars, Cayley \cite{Cayley} for cliques, and Heath and Vergara~\cite{HeathV03} for squares of a path).
Recently the problem was also solved for a special case of complete split graphs (see Gaku {\em et. al.}~\cite{Yasui-SIGAL}). It is also worth mentioning that  a very closely related concept of sorting permutations using cost-constrained transitions was considered by Farnoud, Chen, and Milenkovic \cite{FarnoudM12}, and Farnoud and Milenkovic\cite{FarnoudM10}.

The computational complexity of \tok was investigated by Miltzow {\em et al.}~\cite{MiltzowNORTU16}. They show that the problem is NP-complete and APX-complete. 
Moreover, they show that any algorithm solving \tok in time $2^{o(n)}$ would refute the Exponential Time Hypothesis (ETH) \cite{ImpagliazzoPaturi}.
The results of Miltzow {\em et al.}~\cite{MiltzowNORTU16} carry over also to a generalization of \tok, called \ctok, first introduced by Yamanaka {\em et al.}~\cite{DBLP:conf/wads/YamanakaHKOSUU15}.
In this problem, vertices and tokens are partitioned into color classes.
For each color $c$, the number of tokens colored $c$ equals the number of vertices colored $c$. The question is whether $k$ swaps are enough to reach a configuration in which each vertex contains a token of its own color.
\tok corresponds to the special case where each color class comprises exactly one token and one vertex. NP-hardness of \ctok was first shown by Yamanaka {\em et al.}~\cite{DBLP:conf/wads/YamanakaHKOSUU15}, even in the case that only $3$ colors exist.

We introduce  \stok, which is an even further generalization of \tok. Here a function $D: T \rightarrow 2^V$ specifies the set $D(t)$ of possible  destinations for the token $t$. We ask if $k$ swaps are enough to reach a configuration, when each token $t$ is placed on a vertex from $D(t)$.
Observe that \stok also generalizes \ctok.
It might happen that there is no satisfying swapping sequence at all to this new problem.
Though, this can be checked in polynomial time by deciding if there is a perfect matching in the bipartite token-destination graph. Thus we shall always assume that we have a satisfiable instance.

In this paper we continue and extend the work of Miltzow {\em et al.} \cite{MiltzowNORTU16}. They presented a very simple algorithm which solves the instance of \tok in $n^{O(k)}$ time and space, where $k$ denotes the number of allowed swaps. In Section \ref{sec:algorithms} we show that this algorithm can be easily generalized to \ctok and \stok.
Next, we present a slightly slower exact algorithm, whose advantage is only polynomial (in fact, only slightly super-linear) space complexity.

The algorithm by Miltzow {\em et al.} \cite{MiltzowNORTU16} shows that \tok is in XP. A natural next step is to investigate  whether the problem can be solved in FPT time (i.e., $f(k) \cdot n^{O(1)}$, for some function $f$). There is some evidence indicating that this could be possible. First, observe that if more than $2k$ tokens are misplaced, then one can immediately answer that we deal with a \no-instance, as each swap involves exactly two tokens.
Further, one can safely remove all vertices from the graph that are at distance more than $k$ from all misplaced tokens. This preprocessing yields an equivalent instance, where every connected component has \emph{diameter} $O(k^2)$. Thus for bounded maximum degree $\Delta$ each component has size $f(k)$, for some function $f$.
The connected components of $f(k)$ \emph{size} can be solved separately by exhaustively guessing (still in FPT time) the number of swaps to perform in each of them. Moreover, even the generalized \stok problem is FPT in $k + \Delta$ (see Proposition~\ref{prop:fpt-delta}). For those reasons, one could have hoped for an FPT algorithm for general graphs.
However, as the main result of this paper, we show 
in Section~\ref{sec:param} that this is not possible.
\begin{restatable}[Parameterized Hardness]{theorem}{paraHard}\label{th:param-k-ts}
\tok is {\wone}-hard, parameterized by the number $k$ of allowed swaps.
Moreover, assuming the ETH, for any computable function $f$, \tok cannot be solved in time $f(k)(n+m)^{o(k / \log k)}$ where $n$ and $m$ are respectively the number of vertices and edges of the input graph.
\end{restatable}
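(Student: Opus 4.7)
The shape of the lower bound, $n^{o(k/\log k)}$, is the fingerprint of a reduction from \msi: Marx's ``Can you beat treewidth?'' result states that \msi on a pattern with $k$ edges cannot be solved in time $f(k)n^{o(k/\log k)}$ under the ETH, where $n$ is the size of the host graph. So the plan is to give a polynomial-time reduction from \msi to \tok that produces an instance whose graph has $\text{poly}(n)$ vertices and whose swap budget $k'$ is \emph{linear} in the number $k$ of edges of the pattern $H$. Any $f(k')(n')^{o(k'/\log k')}$ algorithm for \tok would then combine with the reduction to refute Marx's bound, and the same construction immediately gives \wone-hardness since $k'$ depends only on $k$.

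I would lay the construction out as follows. Fix an \msi instance with pattern $H = (V_H, E_H)$, $|E_H| = k$, a host $G$, and a color class $C_i \subseteq V(G)$ for each $i \in V_H$. For every color class $C_i$ build a \emph{selection gadget} $S_i$ whose role is to force one token to commit to a choice of vertex in $C_i$; the gadget is small (a star- or path-like structure with one token slot per element of $C_i$), and choosing vertex $v \in C_i$ costs exactly the same number of swaps in $S_i$, say a constant $c$. For every edge $ij \in E_H$ add an \emph{edge gadget} $E_{ij}$ that connects $S_i$ and $S_j$ through a structure parameterized by the bipartite adjacency matrix of $G[C_i, C_j]$: the gadget is designed so that when $S_i$ has committed to $u$ and $S_j$ to $v$, the cheapest way to restore the tokens through $E_{ij}$ uses exactly a fixed number $c'$ of swaps if $uv \in E(G)$, and strictly more otherwise. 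Set the budget $k' := c|V_H| + c' k = O(k)$.

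The forward direction is routine: a solution subgraph of $H$ in $G$ tells each selector which vertex to pick, and the swap sequence just realizes the intended moves inside each gadget, totaling exactly $k'$ swaps. The backward direction is the real work, and I expect this to be the main obstacle. One must show that no ``cheating'' swap sequence exists: that is, one cannot mix partial commitments, route tokens through foreign gadgets, or exploit the interconnection backbone to shave swaps below $k'$ unless the selectors are mutually consistent (all pairs $(u_i, u_j)$ really form edges in $G$) and globally coherent (the vertex chosen by $S_i$ is the same in every edge gadget $E_{ij}$ incident to $i$). The standard technique is a potential/lower-bound argument: define a weight function on token misplacements (e.g., counting the number of ``wrongly placed'' reference tokens weighted by distance to their targets) that decreases by at most one per swap, and exhibit an initial potential of exactly $k'$ with equality attainable only by a selection encoding an actual multicolored copy of $H$.

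Two secondary issues need care. First, to keep $k' = O(k)$ one cannot afford a selection gadget whose cost scales with $|C_i|$; this is why the selector must be a small ``hub'' where the choice is encoded by \emph{which} token is swapped into a distinguished slot, not by how far a token travels. Second, one should use rigid, asymmetric color-like tokens (pairwise distinct in \tok, which actually helps, since every misplacement must eventually be paid for) to prevent interference between gadgets; this rigidity is precisely what will drive the potential argument. Once the gadget analysis is done, composing with Marx's ETH bound through the linear relation $k' = \Theta(k)$ and $n' = \text{poly}(n)$ yields both the \wone-hardness and the claimed $f(k)(n+m)^{o(k/\log k)}$ lower bound.
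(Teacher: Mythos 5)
Your high-level plan coincides with the paper's: reduce from \msi (with the pattern taken $3$-regular, so that the swap budget is linear in $k$), use per-color-class selection gadgets plus per-pattern-edge adjacency checks, and prove soundness by a potential argument based on the sum of token-to-destination distances (this is exactly Proposition~\ref{prop:lower} in the paper, which gives the $L(I)/2$ lower bound with equality iff all swaps are happy). Your two ``secondary issues'' --- that the selection cost must not scale with $|C_i|$, and that the choice must be coherent across all edge gadgets incident to a color class --- are precisely the right concerns.

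However, the proposal stops exactly where the real proof begins, so there is a genuine gap: you never construct a gadget that actually enforces global coherence, and you acknowledge that the backward direction is ``the real work'' without doing it. The paper's resolution is the \emph{linker gadget} $L_{3,t}$: three \emph{local} tokens must travel from a finishing set to a starting path, and the only routes are $t$ parallel \emph{private paths} of length $3$ (one per vertex of the color class), pre-loaded with well-placed tokens. The budget $\ell = 16.5k$ exceeds the distance lower bound $13.5k$ by exactly the $3k$ well-placed private tokens that are forced to move (Claim~\ref{lem:3k-unhappy-swaps}); this tightness implies every remaining swap must be happy, which in turn forces each local token to stay inside its gadget (Claim~\ref{lem:locality}) and forces all three local tokens of a gadget to use the \emph{same} private path (Claim~\ref{lem:consistency}) --- that single private path is the selected vertex, and the global tokens can then only reach their destinations across an inter-gadget edge that exists iff the two selected vertices are adjacent in the host graph. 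Your sketch of an ``edge gadget $E_{ij}$ parameterized by the bipartite adjacency matrix'' whose cheapest traversal costs $c'$ iff $uv \in E(G)$ is not a construction, and a generic potential that ``decreases by at most one per swap'' does not by itself rule out cheating sequences that mix private paths or route tokens through foreign gadgets; the paper needs the exact accounting of \xyswaps{-1}{+1} versus \xyswaps{+1}{+1} to exclude these. Without a concrete gadget and this case analysis, the soundness direction --- and hence the theorem --- is not established.
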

Observe that this lower bound shows that the simple $n^{O(k)}$-time algorithm is almost best possible.
It is worth mentioning that the parameter for which we show hardness is in fact \emph{number of swaps} + \emph{number of initially misplaced tokens} + \emph{diameter of the graph}, which matches the reasoning presented in the previous paragraph.

To show the lower bound, we introduce handy gadgets called \emph{linkers}. They are simple and can be used to give a significantly simpler proof of the lower bounds given by  Miltzow {\em et al.}~\cite{MiltzowNORTU16}. 
One might also use them to establish a simpler and potentially stronger inapproximability result.

Since there is no FPT algorithm for \tok (parameterized by the number $k$ of swaps), unless \fpt = \wone, a natural approach is to try to restrict the input graph classes, in hope to obtain some positive results. Indeed, in Section \ref{sec:nowheredense} we show that FPT algorithms exist, if we restrict our input to the so-called {\em nowhere-dense graph classes}.
\begin{restatable}[FPT in nowhere dense graphs]{theorem}{FPTnowheredense}\label{th:param-k-lbtw-ts}
\stok is FPT parameterized by $k$ on nowhere-dense graph classes.
\end{restatable}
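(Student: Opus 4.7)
The plan is to combine the preprocessing sketched in the introduction with first-order model checking on nowhere-dense classes. Since each swap moves two tokens, a sequence of $k$ swaps moves at most $2k$ of them, so if more than $2k$ tokens are currently \emph{misplaced}---sit on a vertex outside their destination set $D(t)$---the instance is a \no-instance. Otherwise, I delete every vertex at distance more than $k$ from every misplaced token: such a vertex can neither hold a moving token nor lie on its trajectory, so the reduction is safe. The remaining graph is a disjoint union of components of diameter $O(k^2)$, and since nowhere-dense classes are closed under induced subgraphs we stay in the class.

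Next, I would encode the existence of a valid swap sequence of length at most $k$ as a first-order sentence $\varphi_k$ over the structure $(V, E, D)$, where each token is identified with its starting vertex and $D(u,v)$ means that $v$ is an admissible destination for the token initially at $u$. The sentence existentially quantifies over $2k$ vertex variables $x_1, y_1, \ldots, x_k, y_k$ describing the endpoints of the swaps, each of which is required to be either an edge of $G$ or a degenerate pair $x_i = y_i$ encoding a skipped step (so that the total number of performed swaps is at most $k$). For every $i$, an auxiliary formula $\phi_i(u, z)$ is defined inductively, stating that the token starting at $u$ sits at $z$ after $i$ swaps: $\phi_0(u, z) \equiv (u = z)$, and $\phi_{i+1}$ performs a case split on whether the previous position of the token coincides with $x_{i+1}$ or $y_{i+1}$. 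The sentence $\varphi_k$ then asserts that for every vertex $u$ there exists $z$ with $\phi_k(u, z) \wedge D(u, z)$; its size depends only on $k$.

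Finally, I would invoke the theorem of Grohe, Kreutzer and Siebertz that first-order model checking is fixed-parameter tractable on nowhere-dense classes, which, combined with the preprocessing, yields the desired running time of the form $f(k) \cdot n^{1+o(1)}$.

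The main obstacle is the destination relation $D$: for \stok the sets $D(t)$ are arbitrary, so the Gaifman graph of $(V, E, D)$ may become dense and leave the nowhere-dense class. To cope, I would first restrict each $D(t)$ to the ball of radius $k$ around the initial position of $t$ (a token cannot travel farther in $k$ swaps) and then argue that, within each preprocessed component, the annotation $D$ can be absorbed into the FO signature without disturbing nowhere-density, or alternatively reduce \stok to \tok by first-order-definably selecting a concrete target inside each ball and enumerating over the possible assignments of the at most $2k$ relevant tokens. Formalising this last step is the most delicate part of the proof.
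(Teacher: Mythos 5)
Your proposal follows essentially the same route as the paper: encode the existence of a swap sequence of length at most $k$ as a first-order sentence whose size depends only on $k$ over the vocabulary $\{edge,\, target\}$ (the paper tracks the positions of $2k$ existentially quantified ``traced'' tokens through $k$ rounds via explicit position variables $pos_{j,i}$ rather than your inductively defined $\phi_i$, which is a cosmetic difference), and then invoke the Grohe--Kreutzer--Siebertz meta-theorem; your initial pruning of vertices far from misplaced tokens is the same preprocessing the paper sketches in its introduction. The ``main obstacle'' you single out --- that the destination relation $D$ is an arbitrary binary relation and could destroy nowhere-density of the Gaifman graph of the structure --- is a genuine subtlety, but the paper's proof does not address it at all: it simply places $target(x,y)$ in the signature and applies the meta-theorem as stated for graphs, so on this point you are, if anything, more careful than the published argument, and your proposed repairs (restricting $D(t)$ to the radius-$k$ ball and then eliminating the binary relation) are the right direction even though you have not completed them.
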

The notion of nowhere-dense graph classes has  been introduced as a common generalization of several previously known notions of sparsity in graphs such as planar graphs, graphs with forbidden (topological) minors, graphs with (locally) bounded treewidth or graphs with bounded maximum degree.

Grohe, Kreutzer, and Siebertz \cite{Grohe2014} proved that every property definable as a first-order formula $\varphi$ is solvable in $O(f(|\varphi|,\varepsilon)\, n^{1+\varepsilon})$ time on nowhere-dense classes of graphs, for every $\varepsilon>0$. We use this meta-theorem to show the existence of an FPT time algorithm for \stok, restricted to nowhere-dense graph classes. In particular, this implies the following results.

\begin{restatable}{cor}{FPTcor}\label{cor:param-k}
\stok is FPT
\begin{enumerate}[(a)]
\item\label{k+tree} parameterized by $k + \tw(G)$,
\item parameterized by $k$ in planar graphs.
\end{enumerate}
\end{restatable}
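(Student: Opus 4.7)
The plan is to derive both parts of Corollary~\ref{cor:param-k} directly from Theorem~\ref{th:param-k-lbtw-ts}, by recognising the relevant graph classes as nowhere-dense and then tracking how the running-time bound depends on the choice of class. Part (b) should be immediate: planar graphs exclude $K_5$ as a minor, hence form a proper minor-closed class, which is in particular nowhere-dense. Applying Theorem~\ref{th:param-k-lbtw-ts} to this fixed class yields an algorithm of the desired FPT shape in $k$, as required.

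For part (a), the key observation is that, for every integer $t \geq 0$, the class $\mathcal{C}_t$ of graphs of treewidth at most $t$ is closed under taking minors (treewidth does not increase under minors), so $\mathcal{C}_t$ is a proper minor-closed, and in particular nowhere-dense, class. Given an instance with graph $G$, I would first compute (or constant-factor approximate) $t := \tw(G)$ via Bodlaender's FPT algorithm, and then invoke Theorem~\ref{th:param-k-lbtw-ts} on the class $\mathcal{C}_t$. The resulting running time is of the form $g(t,k) \cdot n^{O(1)}$ with $t = \tw(G)$, which is exactly FPT parameterised by $k + \tw(G)$.

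The only non-routine step is verifying that the bound produced by Theorem~\ref{th:param-k-lbtw-ts} is uniform in the choice of nowhere-dense class, in the sense that the hidden constants are themselves computable functions of the relevant nowhere-density parameters, so that $t$ may be treated as a parameter rather than a fixed constant. I would check this by inspecting the Grohe--Kreutzer--Siebertz meta-theorem underlying Theorem~\ref{th:param-k-lbtw-ts}: it takes as input a witness of nowhere-density (for instance, bounds on the edge density of shallow minors) and runs in time depending computably on this witness. For $\mathcal{C}_t$ such a witness can be read off computably from $t$ (for example via the standard fact that excluding a $K_r$-minor yields density bounds $O(r\sqrt{\log r}\,n)$, together with a bound on $r$ in terms of $t$), so the composite dependence is computable in both $t$ and $k$, completing the argument.
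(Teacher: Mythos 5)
Your proposal is correct and follows essentially the same route as the paper: both parts are derived from Theorem~\ref{th:param-k-lbtw-ts} by observing that planar graphs and graphs of treewidth at most $t$ form nowhere-dense classes, with the exponent of $n$ independent of $t$. Your extra discussion of uniformity (that the hidden constant depends computably on $t$, so $t$ can be promoted from a fixed constant to a parameter) makes explicit a point the paper only asserts in passing, but it is the same argument.
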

\bigskip

It is often observed that NP-hard graph problems become tractable on classes of graphs with bounded treewidth (or, at least, with bounded tree-depth; see Ne\v{s}et\v{r}il and Ossona de Mendez \citep[Chapter 10]{Sparsity} for the definition and some background of tree-depth and related parameters). It is not uncommon to see FPT algorithms running in time $f(\tw)n^{O(1)}$ (or $f(\td)n^{O(1)}$) or XP algorithms running in time $n^{f(\tw)}$ (or $n^{f(\td)}$), for some computable functions $f$.
Especially, in light of Corollary~\ref{cor:param-k}~(\ref{k+tree}), we want to know if there exists an algorithm that runs in polynomial time for constant treewidth.
In Section \ref{sec:almosttrees} we rule out the existence of such algorithms by showing that \tok remains NP-hard when restricted to graphs with tree-depth $4$ (treewidth and pathwidth $2$; diameter $6$; distance $1$ to a forest).

\begin{restatable}[Hard on Almost Trees]{theorem}{HardAlmostTrees}\label{th:param-twd-ts}
 \tok remains \emph{NP}-hard even when both the treewidth and the diameter of the input graph are constant, and cannot be solved in time $2^{o(n)}$, unless the ETH fails.
 \end{restatable}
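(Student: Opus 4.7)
The plan is to give a polynomial-time reduction from \ecover to \tok on graphs of the promised shape. The reason for choosing \ecover is that, combined with the sparsification lemma, the ETH rules out a $2^{o(n+m)}$-time algorithm for it, where $n$ is the size of the universe and $m$ the number of triples; so any reduction whose output graph has $O(n+m)$ vertices will automatically give both the NP-hardness and the $2^{o(|V|)}$ lower bound.

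The construction will be anchored at a single \emph{hub} vertex $r$. To $r$ I attach, by a single edge, one constant-size gadget per element of the universe and one constant-size gadget per triple of the family. Each gadget is a rooted tree of depth at most $3$ hanging off $r$, possibly with a single extra chord to push it out of the trees (so that we really land at treewidth exactly $2$ rather than $1$ and tree-depth $4$ rather than $3$). Because every non-hub vertex is a tree vertex, deleting $r$ leaves a forest; the diameter is at most twice the depth from $r$, giving diameter at most $6$; and putting $r$ in every bag of the trivial tree decomposition of the remaining forest yields width $2$. Element gadgets will be holding a misplaced ``element token'' whose only accepting destinations lie inside the triple gadgets containing that element, and triple gadgets will contain three destination slots plus a short \emph{linker} sub-gadget built in the spirit of Section~\ref{sec:param}: the linker imposes a fixed activation cost $C$ the first time a token crosses the triple gadget, but blows up the cost if the triple is either activated more than once or activated but fails to serve all three of its elements.

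Setting $k := (|U|/3)\,C + B$ where $B$ is the cheap ``baseline'' cost of routing each element token from its element gadget through $r$ into its serving triple gadget, the budget is tight enough that at most $|U|/3$ triples can be activated, each must serve all three of its elements, and these three elements must be distinct. Thus the existence of a swap sequence of length $\le k$ is equivalent to the existence of an exact 3-cover. Since the construction uses $O(n+m)$ vertices, a $2^{o(|V|)}$-time algorithm for \tok on the output graphs would refute the ETH through \ecover.

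The main obstacle is engineering the linker sub-gadget within the severe depth and width budget. The usual long-path linkers used elsewhere in this paper exploit path distance to create cost gaps, which is unavailable here since every vertex must sit within distance $3$ of the hub. My replacement is a short ``bundle'' of parallel length-two branches off the root of the triple gadget, each holding a pair of tokens whose targets are swapped; correctly untangling them all uses exactly $C$ swaps, whereas partially untangling them (e.g.\ to serve only one or two of the triple's elements) strictly exceeds $C$ because leftover tokens must later be re-swapped back through the same bottleneck. Verifying this cost calculus -- in particular that cheating by activating a triple partially, or by shuttling one element token through two different triples, is never strictly cheaper than the honest exact-cover strategy -- is the delicate core of the proof, and it is what pins down the constants in the treewidth, tree-depth, and diameter bounds.
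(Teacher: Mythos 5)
Your overall strategy---reducing from \ecover via a single hub vertex with constant-depth gadgets, charging a surcharge for ``activating'' a triple, and choosing a budget tight enough that exactly $|U|/3$ triples are activated and every element is served---is the same as the paper's. But there are two genuine gaps. First, you describe each element token as having ``accepting destinations'' inside \emph{every} triple gadget containing that element; in \tok a token has exactly one destination, so as written this is a reduction to \stok, not to \tok. The paper resolves this by creating one element vertex per \emph{occurrence} of an element (three per element, one in each of the three set gadgets containing it) and letting the three corresponding tokens permute cyclically, each to its unique ``next occurrence''; the resulting $3$-cycle must be routed through the hub, and the two happy swaps per element that realize it cheaply become available only when some containing triple is activated. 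Without this (or an equivalent) device your construction does not yield a well-defined \tok instance.

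Second, the linker you propose does not achieve what you claim, and the cost calculus you explicitly defer is the entire substance of the proof. A bundle of disjoint length-two branches, each holding a transposed pair, is a collection of independent subproblems: untangling any subset of them costs proportionally less, so ``partially untangling strictly exceeds $C$'' does not follow, and nothing ties the three elements of a triple to a single all-or-nothing activation decision. The paper's gadget instead makes four gray tokens rotate cyclically along a spine with two competing routes, an internal one costing $9$ swaps and a detour through the hub costing $11$; the $2$-swap surcharge is recouped only because the detour transports the triple's element tokens to the hub, where each element's $3$-cycle closes with $2$ happy swaps instead of $4$. Soundness then rests on an explicit count: any solution has length at least $11r+9(n-r)+2(n-q)+4q=11n+2(r+q)$, where $r$ is the number of gadgets not swapped internally and $q$ the number of elements never brought to the hub, which combined with $3r\geq n-q$ and the budget $r+q\leq n/3$ forces $q=0$ and $r=n/3$. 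Some argument of this precision, attached to a concretely specified gadget, is what your proposal is missing.
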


 \bigskip
 
Table \ref{conc:summary1} shows the current state of our knowledge about the parameterized complexity of \tok (TS), \ctok (CTS), and \stok (STS) problems, for different choices of parameters.

\begin{table}[h!]
\centering
\begin{tabular}{lccccc}
\toprule
     & $k + \Delta$  &  $k+\diam$     & $k$, nowhere-dense  
     & $\tw + \diam$     \\
     &   &       &  /   $k + \tw$      &      \\
\midrule
\ts  & FPT (\cite{MiltzowNORTU16}) & \textbf{W[1]-h} (Th~\ref{th:param-k-ts}) &  FPT      & \textbf{paraNP-c} (Th~\ref{th:param-twd-ts}) \\
\cts & FPT &W[1]-h & FPT                  & paraNP-c \\
\sts & \textbf{FPT~} (Prop~\ref{prop:fpt-delta}) &W[1]-h & \textbf{FPT} (Th~\ref{th:param-k-lbtw-ts})   & paraNP-c  \\
\bottomrule
\end{tabular}
\vspace{0.2cm}
\caption{The parameterized complexity of \tok, \ctok, and \stok.}
\label{conc:summary1}
\end{table}

While we think that our results give a fairly detailed view on the complexity landscape of \tok, we also want to point out that our reductions are significantly simpler than those by Miltzow {\em et al.}~\cite{MiltzowNORTU16}.

Since the investigated problems seem to be immensely intractable, in Section \ref{sec:trees} we investigate their complexities in very restricted classes of graphs, namely cliques, stars, and paths. We focus on finding the borderlines between easy (polynomially solvable) and hard (NP-hard) cases. The summary of these results is given in Table \ref{conc:summary2}. Observe that on cliques \tok is in P, while \ctok (and thus also \stok) is NP-hard. On the other hand, on stars \ctok (and thus also \tok) is in P and \stok is NP-hard.

\begin{table}[h!]
\centering
\begin{tabular}{lccccc}
\toprule
        & trees  & cliques  & stars   & paths  \\
\midrule
TS  & ?  & P (see \cite{MiltzowNORTU16})     & P (see \cite{MiltzowNORTU16})  & P (see \cite{MiltzowNORTU16})       \\
CTS & ? & \textbf{NP-c} (Th~\ref{th:cliques-cts}) &\textbf{P} (Th~\ref{th:stars-cts})  & \textbf{P} (Th~\ref{th:paths-cts})      \\
STS& NP-c & {NP-c} & \textbf{NP-c} (Th~\ref{th:stars-sts}) & NP-c~\cite{Guspiel}  \\
\bottomrule
\end{tabular}
\vspace{0.2cm}
\caption{The complexity of \tok (TS), \ctok (CTS), and \stok (STS) on very restricted classes of graphs.}
\label{conc:summary2}
\end{table}

The paper is concluded with several open problems in Section~\ref{sec:conclusion}.

\section{Preliminaries}\label{sec:preliminaries}
Yamanaka {\em et al.} \cite{Yamanaka2014} showed that in every instance of \tok, the length of the optimal solution is $O(n^2)$ and this bound is asymptotically tight for paths. Here we show that long induced paths are the only structures forcing solutions of superlinear length.

\begin{proposition} \label{prop:pr-free}
The length of the optimal solution for \tok in an $n$-vertex $P_{r+1}$-free graph $G$ is at most $r \cdot n$.
\end{proposition}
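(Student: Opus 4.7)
The plan is to proceed by induction on $n$, using the fact that in a $P_{r+1}$-free graph, every shortest path has at most $r$ vertices (since shortest paths are induced), so $\diam(G) \le r-1$. The strategy is to ``park'' one token at its final destination using at most $r-1$ swaps, and then recurse on the rest of the graph without ever touching that vertex again.

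More precisely, for the inductive step, let $G$ be connected, $P_{r+1}$-free, with $n \ge 2$ vertices. I would pick a vertex $v$ whose removal leaves $G$ connected; such a non-cut vertex always exists (take any leaf of a spanning tree of $G$). Let $t^{\star}$ be the token whose target is $v$, and let $u$ be its current position. Route $t^{\star}$ from $u$ to $v$ along a shortest $u$--$v$ path by swapping it one edge at a time; this uses $\dist(u,v) \le \diam(G) \le r-1$ swaps (and $0$ swaps if $u=v$).

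After this phase, $t^{\star}$ sits at $v$. Since the remaining $n-1$ tokens on $V(G)\setminus\{v\}$ all have their targets in $V(G)\setminus\{v\}$, the sub-instance on $G-v$ is well-defined. The graph $G-v$ is $P_{r+1}$-free (as a subgraph of $G$) and connected by the choice of $v$, so the induction hypothesis yields a solution of length at most $r(n-1)$ that never uses $v$ and therefore leaves $t^{\star}$ in place. Concatenating the two phases gives a total of at most $(r-1)+r(n-1) = rn-1 \le rn$ swaps, as required; the base case $n=1$ is trivial.

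The only real subtlety is the choice of $v$: we must guarantee that $G-v$ stays connected so that the induction hypothesis applies (and so that the shortest-path routing of $t^{\star}$ in $G$ does not rely on vertices we intend to delete afterwards). Picking $v$ as a non-cut vertex handles this cleanly, and no further structural property of $P_{r+1}$-free graphs is needed beyond the diameter bound.
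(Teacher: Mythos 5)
Your proof is correct and follows essentially the same strategy as the paper's: place the token destined for a non-cut vertex $v$ using at most $\diam(G)\leq r$ swaps, then recurse on the connected, $P_{r+1}$-free graph $G-v$. The only cosmetic difference is that you obtain the non-cut vertex as a leaf of a spanning tree, whereas the paper takes an endpoint of a longest path; both choices work equally well.
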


  \begin{proof}
  We can assume that $G$ is connected, since otherwise we can solve the problem on connected components separately.
  Let $P$ be the longest path in $G$ and let $v$ be its end-vertex. Observe that $G - v$ is connected (otherwise $P$ is not longest) and $P_{r+1}$-free. First, we move the token with destination $v$ to this vertex, which requires at most $\diam(G) \leqslant r$ swaps. Then we can recursively continue with the graph $G-v$ (we never touch $v$ again). Such a solution has length at most $r \cdot n$.
  \end{proof}

Note that this bound is asymptotically tight -- to see this, consider a graph, whose every connected component is isomorphic to $P_r$ and has the reverse permutation of tokens (if we want to have our instance connected, we can add one additional vertex, adjacent to one of the end-vertices of each path, and put a well-placed token on it).
Moreover, we observe that the bound from Proposition \ref{prop:pr-free} holds also for \ctok and \stok problems. Indeed, we can fix one destination for each of the tokens (by choosing a perfect matching in the token-destination graph) to obtain an instance of \tok, whose solution is also the solution for the original problem.

For a token $t$, let $\dist(t)$ denote the distance from the position of $t$ to its destination. 
For an instance $I$ of \tok, we define $L(I) := \sum_{t} \dist(t)$, i.e., the sum of distances to the destination over all the tokens. 
Clearly, after performing a single swap, $\dist(t)$ may change by at most $1$. 
We shall also use the following classification of swaps: for $x,y \in \{-1,0,1\}$, $x \leq y$, by a {\em \xyswap{x}{y}} we mean a swap, in which one token changes its distance by $x$, and the other one by $y$. 
Intuitively, \xyswaps{-1}{-1} are the most ``efficient'' ones, thus we will call them {\em happy swaps}. 
Since each swap involves two tokens, we get the following lower bound.

\begin{proposition}[\cite{MiltzowNORTU16}] \label{prop:lower}
The length of the optimal solution for an instance $I$ of \tok is at least $L(I)/2$.
Besides, it is \emph{exactly} $L(I)/2$ if and only if there is a solution using happy swaps only.
\end{proposition}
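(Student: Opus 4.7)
The plan is to track the potential function $L(I) = \sum_t \dist(t)$ over the course of a swap sequence and analyze how it changes with each swap. The key observation is that a single swap on an edge $vw$ only affects the two tokens residing at $v$ and $w$; for each such token the position changes from one endpoint of the edge to the other, so its distance to its destination changes by either $+1$, $0$, or $-1$. Hence $L$ can decrease by at most $2$ per swap.

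Next I would use this to get the lower bound. The initial value of the potential is $L(I)$, and the final value (target configuration) is $0$, since every token sits on its destination. Because each swap decreases $L$ by at most $2$, any solution must consist of at least $L(I)/2$ swaps, proving the first assertion.

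For the ``if and only if'' part I would argue both directions by tracking equality in this bound. If a solution has length exactly $L(I)/2$, then the total decrease in $L$, which is $L(I)$, must be achieved over exactly $L(I)/2$ swaps, so every single swap must decrease $L$ by exactly $2$. This is only possible when both involved tokens change their distance by $-1$, i.e., when the swap is a \xyswap{-1}{-1}, a happy swap. Conversely, if a solution uses only happy swaps, then by definition each swap decreases $L$ by exactly $2$; since $L$ must drop from $L(I)$ to $0$, the number of swaps is exactly $L(I)/2$.

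The proof is essentially a potential argument and I do not anticipate any serious obstacle; the only subtlety to state cleanly is that $\dist(t)$ genuinely changes by at most $1$ per swap for the two tokens involved and is unchanged for all other tokens, which follows from the triangle inequality applied to adjacent vertices.
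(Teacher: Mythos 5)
Your proof is correct: the potential argument (each swap moves only the two tokens on the swapped edge, each of whose distances changes by at most $1$, so $L$ drops by at most $2$ per swap, with equality forcing a \xyswap{-1}{-1}) is exactly the intended reasoning, and both directions of the equivalence are handled properly. Note that the paper itself gives no proof of this proposition -- it is quoted from Miltzow {\em et al.}~\cite{MiltzowNORTU16} -- and your argument matches the standard one given there.
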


When designing algorithms, especially for computationally hard problems, it is natural to ask about lower bounds. However, the standard complexity assumption used for distinguishing easy and hard problems, i.e., P $\neq$ NP, is too weak to tell us something meaningful about possible complexities of algorithms. The stronger assumption that is typically used for this purpose is the so-called {\em Exponential Time Hypothesis} (usually referred to as the ETH), formulated by Impagliazzo and Paturi \cite{ImpagliazzoPaturi}. We refer the reader to the survey by Lokshtanov, Marx, and Saurabh for more information about ETH and conditional lower bounds \cite{LokshtanovMS11}.
The version we present below (and is most commonly used) is not the original statement of this hypothesis, but its weaker version (see also Impagliazzo, Paturi, and Zane \cite{DBLP:journals/jcss/ImpagliazzoPZ01}).

\begin{eth-env}[Impagliazzo and Paturi \cite{ImpagliazzoPaturi}]
There is no algorithm solving every instance of \sat with $N$ variables and $M$ clauses in time $2^{o(N+M)}$.
\end{eth-env}

\section{Algorithms}\label{sec:algorithms}
First, we prove that \stok (and therefore also \ctok as its restriction) is FPT in $k + \Delta$, where $k$ is the number of allowed swaps, and $\Delta$ is the maximum degree of the input graph. This generalizes the observation of Miltzow {\em et al.} \cite{MiltzowNORTU16} for \tok.
Furthermore,  we show that the simple algorithm for \tok, presented by Miltzow {\em et al.} \cite{MiltzowNORTU16}, carries over to the generalized problems, i.e., \ctok and \stok. 
At last, we will present an algorithm that has polynomial space complexity.

\begin{proposition} \label{prop:fpt-delta}
\stok is FPT in $k + \Delta$ and admits a kernel of size $2k + 2k^2 \cdot \Delta^k$.
\end{proposition}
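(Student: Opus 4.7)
My plan is to kernelise via two reduction rules and then apply an XP algorithm parameterised by $k$. \textbf{Rule~R1}: if more than $2k$ tokens are misplaced in the initial configuration, return \no -- each of the $k$ allowed swaps touches only two tokens, and a misplaced token must move at least once. \textbf{Rule~R2}: let $U$ be the set of vertices currently holding misplaced tokens (so $|U|\leq 2k$ after~R1), and let $B=\{v \in V(G) : \dist(v,U)\leq k\}$ be the ``$k$-neighbourhood'' of $U$. Delete every vertex outside $B$, together with its (well-placed) token, and replace each surviving destination set $D(t)$ by $D(t)\cap B$. The resulting instance remains satisfiable because every surviving token either already sits in $D(t)\cap B$ (if well-placed) or has a destination within distance $k$, hence inside $B$ (if misplaced).

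The technical crux is justifying R2, i.e.\ showing that the pruned instance is a \yes-instance iff the original one is. One direction is trivial: a length-$\leq k$ solution for the kernel extends to the original by leaving the deleted, already well-placed tokens at rest. For the converse, take any solution $\sigma$ of length $\leq k$ of the original instance. Every token participates in at most $k$ swaps, so its walk along $\sigma$ has length at most $k$; in particular, every misplaced token starts in $U$ and never leaves $B$. Consider the first swap $s$ of $\sigma$ whose edge has an endpoint outside $B$; immediately before $s$, both of its endpoints still hold their original well-placed tokens, since any earlier swap at either endpoint would itself have been outside $B$. Because \stok requires only that the final configuration realise \emph{some} perfect matching between tokens and destinations, the excursion that $s$ initiates outside $B$ can be excised from $\sigma$: either $s$ is a no-op (both tokens remain well-placed and the sequence stays valid after deleting it), or the swap creates misplacement outside $B$ that forces a ``return chain'' of later swaps confined to the same outside region, and removing the whole chain together with $s$ leaves a valid sequence on $B$ of no greater length. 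Iterating produces an equivalent solution whose swaps all lie inside $B$; this reasoning is the main obstacle and needs to be executed carefully so as to cover the STS-specific subtlety of tokens with multiple valid destinations.

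Finally, a ball of radius $k$ in a graph of maximum degree $\Delta$ has at most $1+\Delta+\Delta^2+\cdots+\Delta^k\leq 1+k\Delta^k$ vertices, so $|B|\leq 2k(1+k\Delta^k)=2k+2k^2\Delta^k$, yielding the claimed kernel size. Running the trivial $n^{O(k)}$-time algorithm for \stok from Section~\ref{sec:algorithms} (obtained by generalising the \tok algorithm of~\cite{MiltzowNORTU16}) on this kernel gives total time $(2k+2k^2\Delta^k)^{O(k)}\cdot\mathrm{poly}(n)$, a computable function of $k+\Delta$ times a polynomial in $n$, which establishes fixed-parameter tractability in $k+\Delta$.
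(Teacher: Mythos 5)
Your overall strategy (reject if more than $2k$ tokens are misplaced, keep only the radius-$k$ ball $B$ around the misplaced vertices, bound $|B|\leq 2k(1+\Delta+\cdots+\Delta^k)\leq 2k+2k^2\Delta^k$, then run the $n^{O(k)}$ algorithm on the kernel) is exactly the paper's, but the one step you yourself flag as ``the main obstacle'' --- the safeness of deleting the vertices outside $B$ --- is left with a genuine gap, and the sketch you give for it contains a false claim. You assert that just before the first swap $s$ whose edge has an endpoint outside $B$, \emph{both} endpoints still hold their original tokens ``since any earlier swap at either endpoint would itself have been outside $B$.'' That reasoning only applies to an endpoint lying outside $B$; the other endpoint of $s$ may be a boundary vertex of $B$ and may have participated in arbitrarily many earlier swaps along edges entirely inside $B$, so it need not carry its original token. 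More seriously, the proposed ``excision of the return chain'' is never defined: deleting $s$ changes which tokens sit on its endpoints for all later swaps, and these later swaps need not be ``confined to the outside region'' (a token carried out of $B$ by $s$ can be carried back in and shuffled further), so it is not clear that removing some set of swaps yields a valid, no-longer sequence. As written, the converse direction of R2 is not proved.

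The paper closes this gap with a cleaner argument that avoids any surgery on individual swaps. Given a solution $\s$ of length at most $k$, let $E'$ be the set of edges on which $\s$ performs swaps and let $G'$ be the subgraph they induce. Tokens can only travel within their connected component of $G'$. If a component $C$ of $G'$ contains no initially misplaced vertex, then \emph{all} swaps on edges of $C$ can be deleted simultaneously: the tokens of $C$ are all initially at acceptable destinations and simply stay put, and no other component is affected. After this cleaning, every remaining component contains a misplaced vertex and has at most $k$ edges in total, hence all vertices ever touched by the solution lie within distance $k$ of a misplaced vertex, i.e.\ inside $B$. This componentwise deletion is the missing idea; I would recommend replacing your first-bad-swap excision with it. (Your final size bound and the application of the $n^{O(k)}$ algorithm to the kernel are fine.)
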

\begin{proof}
Let $I$ be an instance of \stok on a graph $G$ with maximum degree $\Delta$ and suppose $I$ has a solution $\s$ of length at most $k$.

Let $V_m$ be the set of such vertices $v$ of $G$, that the token initially placed on $v$ does not accept $v$ as its destination.
First, observe that every vertex from $V_m$ has to be involved in some swap in $\s$. Thus  we can assume that $|V_m|\leq 2k$ (otherwise we immediately report a \no-instance).

Let $E'$ be the set of edges that appear in $\s$ and let $G'$ be the subgraph of $G$ induced by $E'$. Consider a connected component $C$ of $G'$.
Suppose first that the vertex set of $C$ does not contain any vertex from $V_m$. Observe that the sequence $\s'$ obtained from $\s$ by removing all edges from $C$ is also a solution for $I$ of length at most $k$.
So, without loss of generality, every connected component $C$ of $G'$ contains a vertex from $V_m$, and has at most $k$ edges. Let $G''$ be the subgraph of $G$  induced by the vertices at distance at most $k$ from $V_m$ (we find it by running a breadth-first search, starting from $V_m$). We observe that every vertex incident to an edge in $E'$ is in $G''$. Thus the instance $I'$ of \stok, restricted to $G''$, is equivalent to $I$. Note that the maximum degree of $G''$ is at most $\Delta$, and the number of vertices in $G''$ is at most $2k + 2k^2\Delta^{k}$. Thus $I'$ is a kernel for $I$.
\end{proof}

Miltzow {\em et al.} \cite{MiltzowNORTU16} show that an optimal solution for \tok can be found by performing a breath-first-search on the {\em configuration graph}, i.e. a graph, whose vertices are all possible configurations of tokens on vertices, and two configurations are adjacent when we can obtain one from another with a single swap. We observe that the same approach works for  \ctok and  \stok, the only difference is that we terminate on any feasible target configuration.

\begin{proposition} \label{prop:exact}
Let $G$ be a graph with $n$ vertices, and let $k$ be the maximum number of allowed swaps. The \ctok and the \stok problems on $G$ can be solved in time:
\begin{itemize}
\item $O(n! \cdot n^2) = 2^{O(n \log n)}$,
\item $n^{O(k)}=2^{O(k \log n)}$,
\end{itemize}
using exponential space. \qed
\end{proposition}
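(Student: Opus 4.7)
The plan is to lift the breadth-first search on the configuration graph from Miltzow \emph{et al.}~\cite{MiltzowNORTU16} to the colored and subset settings. Define a \emph{configuration} to be a placement of the $n$ tokens on the $n$ vertices, so configurations are in bijection with permutations of $[n]$. The configuration graph $\mathcal{C}$ has an edge between two configurations iff they differ by a single swap along an edge of $G$; in particular, every configuration has at most $|E(G)|\leq\binom{n}{2}$ neighbors in $\mathcal{C}$. I would run BFS in $\mathcal{C}$ starting from the initial configuration, and at each dequeued configuration check whether it is \emph{feasible}: for \ctok, every vertex holds a token of its own color; for \stok, every token $t$ sits on a vertex of $D(t)$. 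The first feasible configuration encountered gives the optimum, by the usual BFS shortest-path argument.

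For correctness it suffices to note that the set of feasible configurations is nonempty precisely when the bipartite token-destination graph admits a perfect matching, which we assume throughout the paper; hence BFS will terminate with a feasible configuration, and any shorter swap sequence would correspond to a shorter path in $\mathcal{C}$ ending at some feasible configuration, contradicting BFS optimality. The feasibility test at each node is clearly polynomial (in fact, linear in $n$), so it does not affect the asymptotic running time.

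For the complexity bounds, the total number of configurations is $n!$ and each has at most $O(n^2)$ neighbors in $\mathcal{C}$, giving the first bound $O(n!\cdot n^2)=2^{O(n\log n)}$. For the parameterized bound, we simply cut the BFS off at depth $k$: since each swap is determined by a choice among $|E(G)|\leq n^2$ edges, the number of configurations reachable within $k$ swaps is at most $n^{2k}$, and enumerating their $O(n^2)$ neighborhoods yields $n^{O(k)}=2^{O(k\log n)}$ total work. Storing the visited set (needed to avoid revisiting) is what forces the exponential space.

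The only subtlety, and essentially the main thing to verify, is that truncating BFS at depth $k$ is sound even when no feasible configuration lies within that radius: if no feasible configuration is enqueued by depth $k$ we output \no, which is correct precisely because BFS enumerates \emph{all} configurations reachable by at most $k$ swaps. Everything else is a direct rephrasing of the TS argument of~\cite{MiltzowNORTU16}, with the single-target termination condition replaced by the feasibility predicate above; hence the proof is short and we can essentially invoke Miltzow \emph{et al.}\ as a black box.
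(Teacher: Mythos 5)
Your proposal is correct and follows exactly the paper's approach: the paper proves this proposition by the one-sentence observation that the BFS on the configuration graph from Miltzow \emph{et al.}\ carries over verbatim, with the termination condition replaced by ``any feasible configuration.'' Your write-up just fills in the routine details (feasibility test, depth-$k$ truncation, counting reachable configurations) that the paper leaves implicit.
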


The main drawback of such an approach is an exponential space complexity. Here we show the following complementary result, inspired by the ideas of Savitch~\cite{Savitch70}.

\begin{theorem}
Let $G$ be a graph with $n$ vertices, and let $k$ be the maximum number of allowed swaps. \stok on $G$ can be solved in time $2^{O(n \log n  \log k)}=2^{O(n \log^2 n)}$ and space $O(n \log n \log k)=O(n \log^2 n)$.
\end{theorem}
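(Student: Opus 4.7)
The plan is to adapt the divide-and-conquer technique from Savitch's theorem to the configuration graph of \stok, whose vertices are the $n!$ token placements (each encoded as a permutation on $O(n\log n)$ bits) and whose edges correspond to single swaps along edges of $G$. Define a recursive Boolean procedure $\mathrm{Reach}(C_1,C_2,i)$ that returns \yes iff $C_1$ can be transformed into $C_2$ using at most $2^i$ swaps. The base case $i=0$ simply tests in polynomial time whether $C_1=C_2$ or whether $C_2$ is obtained from $C_1$ by a single swap along some edge of $G$. For $i>0$, the procedure iterates through every configuration $M$ and returns \yes as soon as it finds one for which both $\mathrm{Reach}(C_1,M,i-1)$ and $\mathrm{Reach}(M,C_2,i-1)$ return \yes. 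To solve \stok, set $I:=\lceil\log_2 k\rceil$ and enumerate, one at a time, every configuration $C^\star$ that is a valid target (i.e., places every token $t$ on a vertex of $D(t)$), answering \yes iff $\mathrm{Reach}(C_{\mathrm{start}},C^\star,I)$ succeeds for some such $C^\star$.

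Correctness of $\mathrm{Reach}$ follows by a straightforward induction on $i$: any sequence of $\ell\leq 2^i$ swaps can be cut at step $\lfloor\ell/2\rfloor$ to yield a witness $M$ together with two subsequences of length at most $2^{i-1}$, and conversely any recursion witness assembles into a swap sequence of length at most $2\cdot 2^{i-1}=2^i$. For the complexity, the recursion depth is $I+1=O(\log k)$, and each frame on the call stack stores only the two current configurations, the current enumeration variable $M$, and a depth counter, i.e.\ $O(n\log n+\log k)$ bits per frame. Multiplying by the depth, the total space is $O(n\log n\log k)$; the outer enumeration over valid targets adds only an $O(n\log n)$ overhead. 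For the time, each call at level $i>0$ performs $n!\cdot\mathrm{poly}(n)$ work outside its two recursive descendants, so unrolling gives $T(I)\leq (n!)^I\cdot\mathrm{poly}(n)=2^{O(n\log n\log k)}$; iterating over the $n!=2^{O(n\log n)}$ valid target configurations at the outermost level does not change the asymptotics since $\log k\geq 1$.

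The main conceptual obstacle is aligning the ``power-of-two'' budget of the recursion with the arbitrary bound $k$ of the original instance: the key observation is that allowing both $0$ and $1$ swap in the base case ensures that any sequence of length $\ell\leq 2^i$ can be split inductively, so that choosing $I=\lceil\log_2 k\rceil$ already exposes every length-$k$ witness. The rest is routine bookkeeping; the same Savitch-style recursion works unchanged for \tok and \ctok, and for \stok the only additional ingredient is the outer enumeration over valid target configurations, which can be done in lexicographic order using $O(n\log n)$ auxiliary space.
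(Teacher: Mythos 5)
Your overall strategy---a Savitch-style divide-and-conquer on the configuration graph, with an outer enumeration over all valid target configurations to reduce \stok to reachability---is exactly the approach the paper takes, and your space accounting (depth $O(\log k)$ times $O(n\log n)$ bits per frame) matches the paper's. However, there is a genuine soundness gap in how you tie the recursion to the bound $k$. Your procedure $\mathrm{Reach}(C_1,C_2,i)$ decides reachability in \emph{at most} $2^i$ swaps, and you invoke it at the top level with $I=\lceil\log_2 k\rceil$. When $k$ is not a power of two, $2^I$ can be as large as almost $2k$, so the top-level call accepts every instance whose optimal solution length lies in the interval $(k,2^I]$. Such instances are \no-instances of the question ``is there a solution of length at most $k$?'', and they do exist (on a path, for example, the optimal length equals the number of inversions and can take any value up to $\binom{n}{2}$). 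Your remark that $I=\lceil\log_2 k\rceil$ ``exposes every length-$k$ witness'' establishes completeness but not soundness, which is precisely the direction that breaks.

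The fix is small and is what the paper does: parameterize the recursion by an arbitrary integer budget $b$ rather than a power of two, splitting $b$ into $\lceil b/2\rceil$ and $\lfloor b/2\rfloor$ in the recursive case (the paper's variant decides reachability in \emph{exactly} $b$ swaps; an ``at most $b$'' semantics with the same exact split also works and avoids the parity bookkeeping). The recursion depth is still $O(\log k)$, so none of your complexity bounds change. Two minor points: each level-$i$ call spawns $2\cdot n!$ recursive calls (two per candidate midpoint $M$), not two, though this only contributes a factor $2^{O(\log k)}$ absorbed by $2^{O(n\log n\log k)}$; and to get the stated $2^{O(n\log^2 n)}$ form one should note, as the paper does, that $k=O(n^2)$ may be assumed since longer budgets make the instance trivially a \yes-instance.
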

\begin{proof}
Consider the algorithm \texttt{Reach} (see Algorithm \ref{alg:reach}). It is easy to verify that it returns {\em true} if the configuration $\pi_s$ can be reached from the configuration $\pi_0$ with \textbf{exactly} $k$ swaps, and {\em false} otherwise.

\begin{algorithm}[htb]
\small
\caption {Reach($G,\pi_0,\pi_s,k$)}
\label{alg:reach}
\KwIn{$G=(V,E)$ -- a graph, $\pi_0,\pi_s$ -- configurations of tokens on $G$, integer $k \geq 0$}
\If{$k = 0$}
{
	\lIf{$\pi_0 = \pi_s$}{\Return {\em true}}
	\lElse{\Return {\em false}}	
}
\If{$k = 1$}
{
	\ForEach{$e \in E$}
	{
		\If{$\pi_s$ can be obtained from $\pi_0$ with a swap on $e$}{\Return {\em true}}
	}
			\Return {\em false}
}
\Else
{
	\ForEach{configuration $\pi'$ of tokens on $G$}
	{		
		\If {$\operatorname{Reach}(G,\pi_0,\pi',\lceil k/2 \rceil)=true$ \textbf{and} $\operatorname{Reach}(G,\pi',\pi_s,\lfloor k/2 \rfloor)=true$}
		{
		\Return {\em true}
		}		
	}
	\Return {\em false}
}
\end{algorithm}
The depth of the recursion is $O(\log k)$.  The configurations can be generated with polynomial delay, using only linear (in $n$) memory. Thus the time complexity of the algorithm is $n!^{\log k} \cdot n^{O(1)} = 2^{O(n \log n \log k)}$. The space needed to keep track of the recursive stack is $O(n \log n \log k)$. Recall that $k = O(n^2)$ -- otherwise we immediately report a \yes-instance.

To use the algorithm for  \stok, we can enumerate all possible target configurations in $n! \cdot n^{O(1)} = 2^{O(n \log n)}$ time and polynomial space, and then solve the instance of \tok for each of them.
\end{proof}

\section{Lower bounds on parameterized Token Swapping}\label{sec:param}
Let us start by defining an auxiliary problem, called \msi (also known as \textsc{Partitioned Subgraph Isomorphism}; see Figure~\ref{fig:SubgraphIso}).

\begin{figure}
 \centering
 \includegraphics{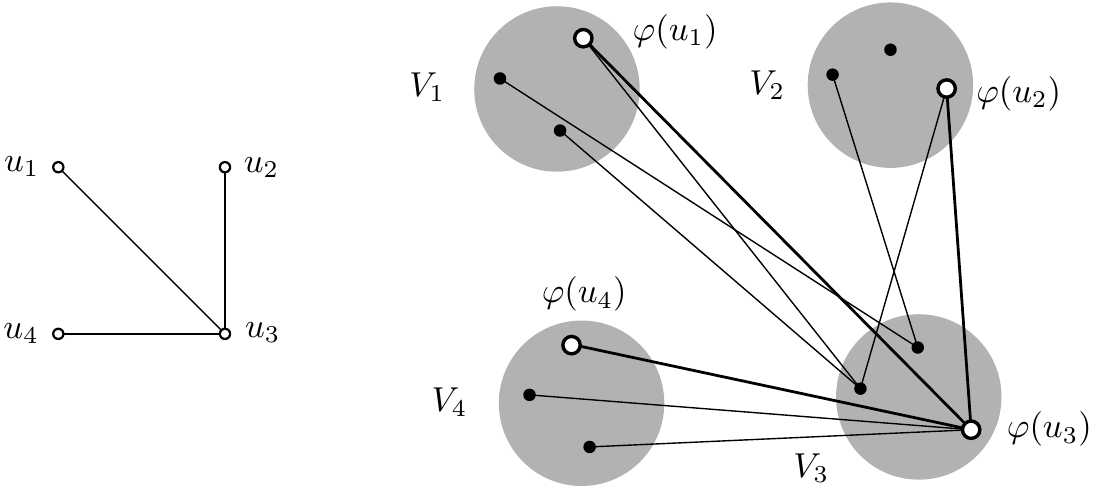}
 \caption{On the left is the pattern graph $P$; on the right, the host graph $H$. 	We indicate the image of $\varphi$ with white vertices. To keep the example small, we did not make $P$ $3$-regular.
}
 \label{fig:SubgraphIso}
\end{figure}

In \msi, one is given a host graph $H$ whose vertex set is partitioned into $k$ color classes $V_1 \uplus V_2 \uplus \ldots \uplus V_k$ and a pattern graph $P$ with $k$ vertices: $V(P) = \{u_1,\ldots,u_k\}$.
The goal is to find an injection $\varphi: V(P) \rightarrow V(H)$ such that
$u_iu_j\in E(P)$ implies that $\varphi(u_i)\varphi(u_j) \in E(H)$ and $\varphi(u_i) \in V_i$ for all $i,j$.
Thus we can assume that each $V_i$ forms an independent set.
Further, we assume without loss of generality that $E(V_i,V_j) := \{ ab \in E(H) \colon a \in V_i, b \in V_j\}$ is non-empty if and only if $u_iu_j \in E(P)$. 
In other words, we try to find $k$ vertices $v_1 \in V_1$, $v_2 \in V_2$, $\ldots$, $v_k \in V_k$ such that, for any $i < j \in [k]$,\footnote{For an integer $p$, by $[p]$ we denote the set $\{1,\ldots,p\}$.} there is an edge between $v_i$ and $v_j$ if and only if $E(V_i,V_j)$ is non-empty. 
The \wone -hardness of \msi problem follows from the \wone -hardness of the \textsc{Multicolored Clique}. Marx \cite{Marx10} showed that assuming the ETH, \msi cannot be solved in time $f(k)(|V(H)|+|E(H)|)^{o(k / \log k)}$, for any computable function $f$, even when the pattern graph $P$ is $3$-regular and bipartite (see also Marx and Pilipczuk \cite{MarxP15}).
In particular, $k$ has to be an even integer since $|E(P)|$ is exactly $3k/2$. 
We finally assume that for every $i \in [k]$ it holds that $|V_i|=t$, by padding potentially smaller classes with isolated vertices. 
This can only increase the size of the host graph by a factor of $k$, and does not create any new solution nor destroy any existing one.

Now we are ready to prove the following theorem.

\paraHard*

\begin{proof}
To show parameterized hardness of \tok, we introduce a very handy \emph{linker gadget}.
This gadget has a robust and general ability to link decisions.
As such, it permits to reduce from a wide range of problems.
Its description is short and its soundness is intuitive. 
Because it yields very light constructions, we can rule out fairly easily unwanted swap sequences.
We describe the {linker gadget} and provide some intuitive reason why it works (see Figure~\ref{fig:linker}).
\paragraph{\textbf{Linker gadget.}}

Given two integers $a$ and $b$, the linker gadget $L_{a,b}$ contains a set of $a$ vertices, called \emph{finishing set} and a path on $a$ vertices, that we call \emph{starting path}.
The tokens initially on vertices of the finishing set are called \emph{local tokens}; they shall go to the vertices of the starting path in the way depicted in Figure~\ref{fig:linker}.
The tokens initially on vertices of the starting path are called \emph{global tokens}.
Global tokens have their destination in some other linker gadget.
To be more specific, their destination is in the finishing set of another linker.

We describe and always imagine the finishing set and the starting paths \emph{to be ordered from left to right}.  
Below the finishing set and to the left of the starting path, stand $b$ disjoint induced paths, each with $a$ vertices, arranged in a grid, see Figure~\ref{fig:linker}.  
We call those paths \emph{private paths}.
The \emph{private tokens} on private paths are already well-placed.
Every vertex in the finishing set is adjacent to all private vertices below it and the leftmost vertex of the starting path is adjacent to all rightmost vertices of the private paths.

\begin{figure}[h!]
\centering
\includegraphics{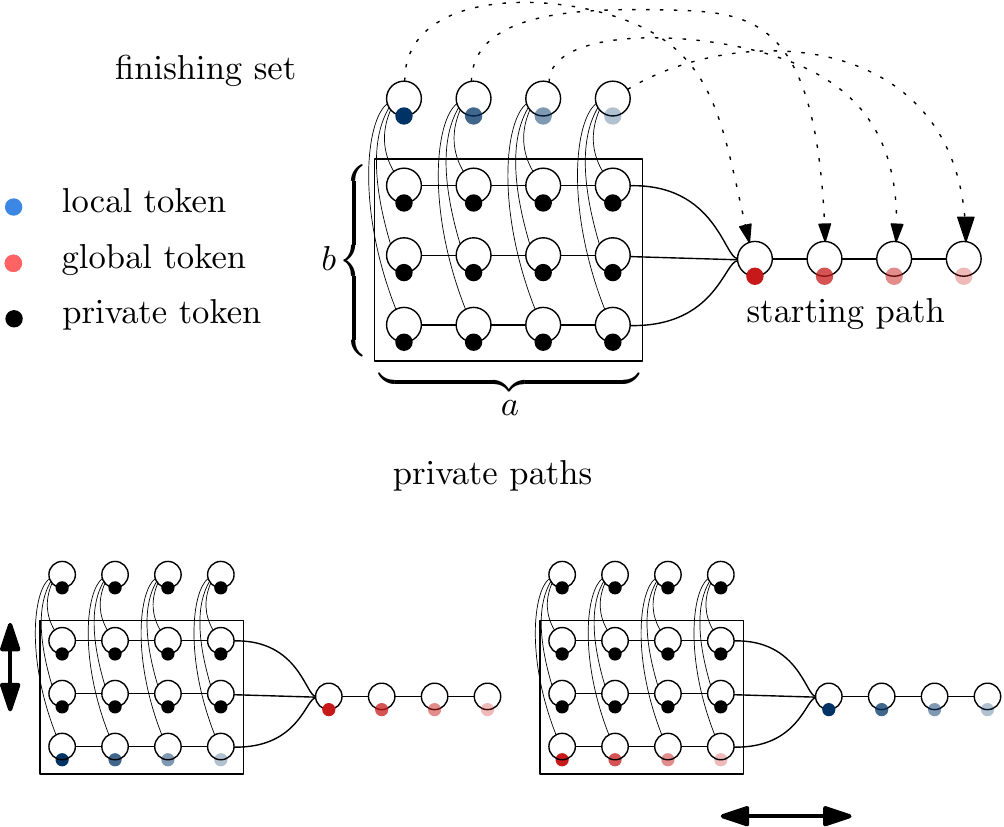}

\caption{The linker gadget $L_{a,b}$. Black (private) tokens are initially properly placed.
Dashed arcs represent destinations of tokens of the finishing set (they all go to the starting path).
In the intended solution, all local tokens are moved to a single private path (bottom left). 
Next, they are swapped with the tokens on the starting path (bottom right). 
The global tokens go to that private path.}
\label{fig:linker}
\end{figure}

For local tokens to go to the starting path, they must go through a private path.
As its name suggests, the linker gadget aims at linking the choice of the private path used for every local token. 
Intuitively, the only way of benefiting from $a^2$ happy swaps between the $a$ local tokens and the $a$ global tokens is to use a unique private path (note that the destination of the global tokens will make those swaps happy).
That results in a kind of configuration as depicted in the bottom right of Figure~\ref{fig:linker}, where each global token is in the same private path.
The fate of the global tokens has been linked. 

\paragraph{\textbf{Construction.}}
We present a reduction from \msi with cubic pattern graphs to \tok where the number of allowed swaps is linear in $k$. Let $(H,P)$ be an instance of \msi.
For any color class $V_i=\{v_{i,1},v_{i,2},\ldots,v_{i,t}\}$ of $H$, we add a copy of the linker $L_{3,t}$ that we denote by $L_i$.
We denote by $j_1<j_2<j_3$ the indices of the neighbors of $u_i$ in the pattern graph $P$.
The linker $L_{i}$ will be linked to $3$ other gadgets and it has $t$ private paths (or \emph{choices}).
The finishing set of $L_i$ contains, from left to right, the vertices $a(i,j_1)$, $a(i,j_2)$, and $a(i,j_3)$.
We denote the tokens initially on the vertices $a(i,j_1)$, $a(i,j_2)$, and $a(i,j_3)$ by $\loc(i,j_1),\loc(i,j_2),\loc(i,j_3)$, respectively.

\begin{figure}[h!]
\centering
\includegraphics{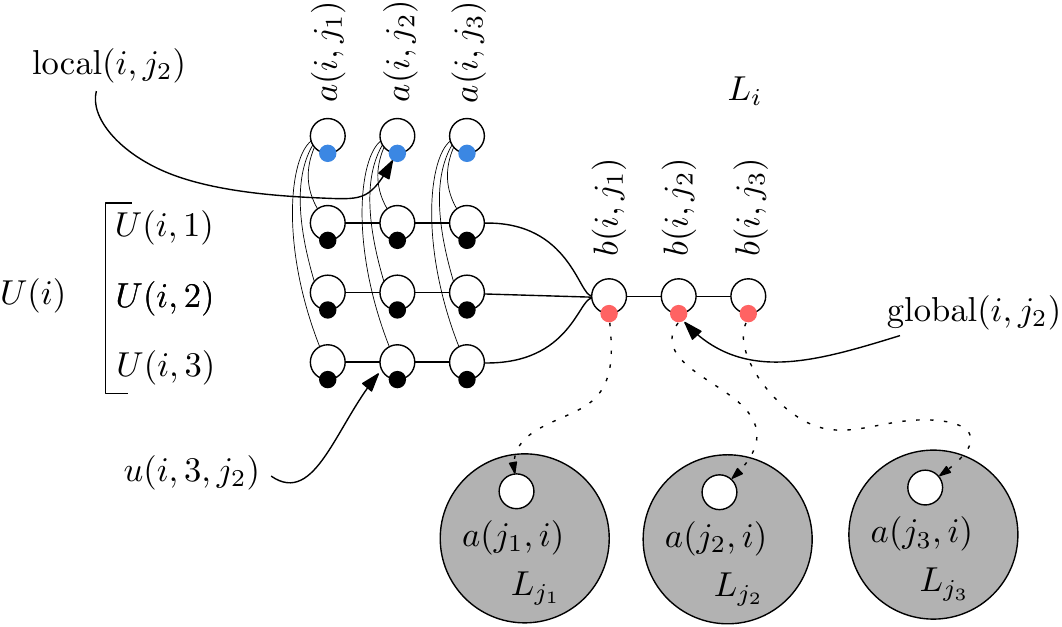}
\caption{The different labels for tokens, vertices, and sets of vertices.}
\label{fig:linker2}
\end{figure}

The starting path contains, from left to right, vertices $b(i,j_1)$, $b(i,j_2)$, and $b(i,j_3)$ with tokens $\glo(i,j_1)$, $\glo(i,j_2)$, and $\glo(i,j_3)$.

For each $p \in [3]$, $\loc(i,j_p)$ shall go to vertex $b(i,j_p)$, whereas $\glo(i,j_p)$ shall go to $a(j_p,i)$ in the gadget $L_{j_p}$.
Observe that the former transfer is internal and may remain within the gadget $L_i$, while the latter requires some interplay between the gadgets $L_i$ and $L_{j_p}$.
For any $h \in [t]$, by $\U(i,h)$ we denote the $h$-th private path.
This path represents the vertex $v_{i,h}$.
The path $\U(i,h)$ consists of, from left to right, vertices $u(i,h,j_1)$, $u(i,h,j_2)$, $u(i,h,j_3)$.
We set $\U(i) := \bigcup_{h \in [t]} \U(i,h)$.
Initially, all the tokens placed on vertices of $\U(i)$ are already well placed.

\begin{figure}[htbp]
\centering
\includegraphics{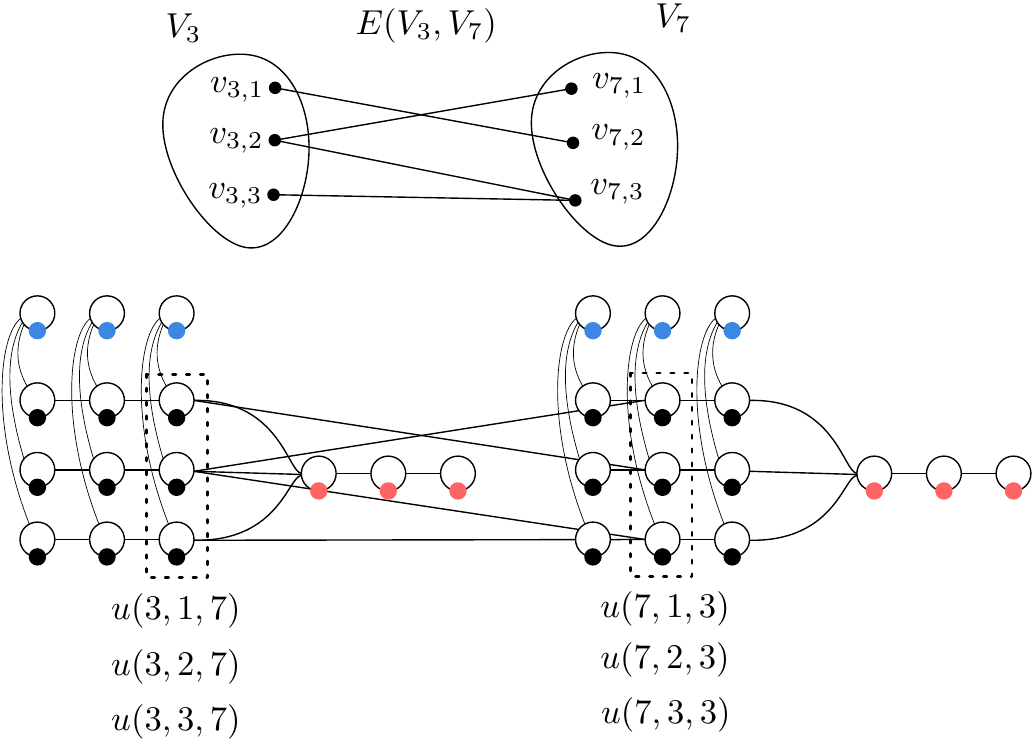}
\caption{The way linkers (in that case, $L_3$ and $L_7$) are assembled together, with $t=3$.}
\label{fig:Linking}
\end{figure}

We complete the construction by adding an edge $u(i,h,j)u(j,h',i)$ whenever $v_{i,h}v_{j,h'}$ is an edge in $E(V_i,V_j)$ (see Figure~\ref{fig:Linking}).
Let $G$ be the graph that we built, and let $I$ be the whole instance of \tok (with the initial position of the tokens).
We claim that $(H,P)$ is a \yes-instance of \msi if and only if $I$ has a solution of length at most $\ell := 16.5k=O(k)$.
Recall that $k$ is even, so $16.5k$ is an integer.

\paragraph{\textbf{Correctness.}} 
$(\Rightarrow)$ First assume that there is a solution $\{v_{1,h_1},  v_{2,h_2}, \ldots, v_{k,h_k}\}$ to the \msi instance.
We perform the following sequence of swaps.
The orderings that \emph{we do not specify} among those swaps are not important, which means that they can be done in an arbitrary fashion.  
In each gadget $L_i$, we first bring $\loc(i,j_3)$ to $b(i,j_3)$, then $\loc(i,j_2)$ to $b(i,j_2)$, and finally $\loc(i,j_1)$ to $b(i,j_1)$, each time passing through the same private path $\U(i,h_i)$.
This corresponds to a total of $12$ swaps per gadget and $12k$ swaps in total.
Note that $\glo(i,j_p)$ is moved to $u(i,h_i,j_p)$.
Now, for each edge $v_{i,h_i}v_{j,h_j}$ of the host graph $H$ (i.e., $u_iu_j \in E(P)$), we swap the tokens $\glo(i,j)$ and $\glo(j,i)$.
By construction of $G$, $u(i,h_i,j)u(j,h_j,i)$ is indeed an edge in $E(G)$, so this swap is legal.
This adds $3k/2$ swaps.
At this point, the token $\glo(j,i)$ is on vertex $u(i,h_i,j)$.
Therefore, we move each token $\glo(j,i)$ to the vertex $a(i,j)$ in one swap.
This corresponds to $3k$ additional swaps.
Observe that it has also the effect of putting the private tokens back to their original private path.
Thus, every token is now well placed.
The overall number of swaps in this solution is $12k+3k/2+3k=16.5k=\ell$.

$(\Leftarrow)$ We now assume that there is a solution $\s$ to \tok of length at most $\ell$.
We define $Y := \{(i,j) \,|\, u_iu_j\in E(P)\}$.
Note that $(i,j)\in Y$ implies $(j,i)\in Y$, and $|Y|=3k$. 
We compute the sum $L(I)$ of the distances \emph{token to destination}.
For any $(i,j) \in Y$, $\loc(i,j)$ is at distance $4$ of its destination $b(i,j)$ (via any private path).
For any $(i,j) \in Y$, $\glo(i,j)$ is at distance $5$ of its destination $a(j,i)$ (following any private path of $L_i$, then an edge to gadget $L_j$, and a last edge to $a(j,i)$).
The rest of the tokens are initially well-placed.
Therefore, $L := L(I)=(4 + 5) \cdot 3k=27k$.
By Proposition~\ref{prop:lower}, the length of any solution for $I$ is at least $13.5k$.

\begin{claim}\label{lem:3k-unhappy-swaps}
In any solution $\s$ for $I$, at least $3k$ initially well-placed tokens have to move.
\end{claim}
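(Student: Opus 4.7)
My plan is to pair each of the $3k$ local tokens with a distinct initially well-placed token (necessarily a private token) that must have moved during $\s$. Since there are $3$ local tokens in each of the $k$ linkers, this will furnish $3k$ witnesses.

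First I would note that the only initially well-placed tokens in $I$ are the private tokens sitting on the vertices of $\bigcup_{i,h} \U(i,h)$: every $\loc(i,j)$ and every $\glo(i,j)$ has a destination different from its starting vertex. Next, each local token $\loc(i,j_p)$ must take part in at least one swap of $\s$, because its destination $b(i,j_p)$ differs from its starting vertex $a(i,j_p)$. I would then look at the \emph{first} swap in which $\loc(i,j_p)$ participates. At that moment $\loc(i,j_p)$ still sits on $a(i,j_p)$, and because the finishing set of $L_i$ is independent and each of its vertices only connects to the private column directly below it, the only neighbors of $a(i,j_p)$ in $G$ are the vertices $u(i,h,j_p)$ for $h \in [t]$. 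Hence this first swap takes place along some edge $a(i,j_p)\,u(i,h_p,j_p)$.

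To the local token $\loc(i,j_p)$ I would assign the private token $\tau(i,j_p)$ initially placed on $u(i,h_p,j_p)$, and argue that $\tau(i,j_p)$ must have moved during $\s$. This is the only step where a case split is needed, and I regard it as the main subtlety: a priori one cannot assume that $u(i,h_p,j_p)$ still carries its original token when $\loc(i,j_p)$ leaves $a(i,j_p)$. The case split is however short. Either $\tau(i,j_p)$ had already been displaced from $u(i,h_p,j_p)$ by some earlier swap (so it moved before the first swap of $\loc(i,j_p)$), or it was still sitting on $u(i,h_p,j_p)$ and is pushed onto $a(i,j_p)$ by the first swap of $\loc(i,j_p)$ itself (so it moves at that moment). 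In both cases $\tau(i,j_p)$ has moved at some point of $\s$.

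Finally, I would verify that the $3k$ tokens $\tau(i,j_p)$, for $i \in [k]$ and $p \in \{1,2,3\}$, are pairwise distinct. Within a single linker $L_i$ the indices $j_1 < j_2 < j_3$ are different, so the vertices $u(i,h_1,j_1), u(i,h_2,j_2), u(i,h_3,j_3)$ lie in three different columns of the private grid and host three different private tokens. Across linkers the first coordinate $i$ distinguishes them. Hence we obtain $3k$ pairwise distinct initially well-placed tokens that must have moved during $\s$, which proves the claim.
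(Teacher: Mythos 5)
Your proposal is correct and follows essentially the same route as the paper's (much terser) proof: each local token is misplaced and its neighborhood consists only of private, initially well-placed tokens, and these neighborhoods are pairwise disjoint, so each local token forces a distinct private token to move. Your case split on whether $\tau(i,j_p)$ is still on $u(i,h_p,j_p)$ at the time of the first swap of $\loc(i,j_p)$ is exactly the detail the paper leaves implicit.
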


\begin{inproof}{\ref{lem:3k-unhappy-swaps}}
There are $3k$ local tokens and each has a disjoint neighborhood from all the others.
Furthermore, all tokens in their neighborhood are private tokens, 
which are already well placed.
\end{inproof}


In solution $\s$, let $x$ be the number of swaps between a well-placed token and a misplaced token (in the best case, \xyswaps{-1}{+1}), and $y$ the number of swaps between two well-placed tokens (\xyswaps{+1}{+1}). 
Claim~\ref{lem:3k-unhappy-swaps} implies that $x+2y \geqslant 3k$. 
Those $x+y$ swaps increase the sum of distances \emph{token to destination} by $2y$; its value reaches $L+2y$.
As $\ell \leqslant 16.5k$, there can only be at most $16.5k-(x+y) \leqslant 13.5k+y=\frac{L+2y}{2}$ other swaps.
Therefore, all those swaps shall be happy.
It also implies that in each $\U(i)$ exactly $3$ well-placed tokens move in solution $\s$. A last consequence is that all the swaps strictly worse than \xyswaps{-1}{+1} (that is, \xyswaps{0}{+1} and \xyswaps{+1}{+1}) have to be swaps between two well-placed tokens.

\begin{claim}\label{lem:locality}
In any solution $\s$, no token $\loc(i,j)$ leaves the gadget $L_i$.
\end{claim}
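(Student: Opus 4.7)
The plan is to exploit the tight budget established in the paragraph preceding the claim to force each token's trajectory in $\s$ to be a geodesic of $G$, and then to check that every geodesic from $a(i,j)$ to $b(i,j)$ stays inside $L_i$.

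For each token $t$, let $w(t)$ denote the number of swaps of $\s$ in which $t$ participates, so $\sum_t w(t) = 2T$ where $T$ is the total number of swaps. The analysis just before the claim already forced $x + 2y = 3k$ and hence $T = 16.5k$, giving $\sum_t w(t) = 33k$. Any initially misplaced token $t$ traces a walk from its starting vertex to its destination, so $w(t) \geq \dist(t)$; any initially well-placed token that ever moves must leave its destination and return to it, so $w(t) \geq 2$. Combined with Claim~\ref{lem:3k-unhappy-swaps}, which provides at least $3k$ initially well-placed tokens that move, this gives
\[
33k \;=\; \sum_t w(t) \;\geq\; L + 2 \cdot 3k \;=\; 27k + 6k \;=\; 33k,
\]
so every inequality collapses to equality. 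In particular, $w(t) = \dist(t)$ for every initially misplaced token $t$, which forces the trajectory of $t$ to be a shortest path in $G$ from its starting vertex to its destination (a walk whose length equals the distance leaves no slack for any backtracking).

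Applied to $\loc(i,j)$, whose initial distance to $b(i,j)$ is $4$, this says the trajectory of $\loc(i,j)$ is a geodesic of length $4$, and I would finish by verifying directly from the construction that any such geodesic lies inside $L_i$. The only neighbors of $a(i,j)$ in $G$ are the private vertices $u(i,h,j)$ of $L_i$, so the first step is forced inside $L_i$. The only edges of $G$ leaving $L_i$ are the crossing edges $u(i,h,j')u(j',h',i)$, and an excursion into a neighboring gadget $L_{j'}$ requires one such edge to leave and another to return, lifting any path from $a(i,j)$ to $b(i,j)$ that goes through $L_{j'}$ strictly above length $4$. Hence no geodesic of length $4$ can visit a vertex outside $L_i$, and $\loc(i,j)$ never leaves $L_i$. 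The main obstacle is really only this last detour estimate, which reduces to the observation that $L_i$ is cut from the rest of $G$ by the crossing edges alone, so every round trip out of $L_i$ consumes at least two of them; once that is in place, the budget accounting above closes the argument.
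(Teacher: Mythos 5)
Your argument is correct, and it reaches the claim by a genuinely different route than the paper. The paper works at the level of individual swaps: from the budget it deduces that every swap strictly worse than a \xyswap{-1}{+1} must involve two well-placed tokens, observes that the swap carrying $\loc(i,j)$ across a crossing edge increases its distance and therefore would have to be a \xyswap{+1}{+1} between two well-placed tokens, and derives a contradiction since $\loc(i,j)$ is misplaced. You instead work at the level of tokens: double-counting swap participations gives $\sum_t w(t) = 2T \leq 33k$, while $w(t)\geq \dist(t)$ for misplaced tokens and $w(t)\geq 2$ for the at least $3k$ moving well-placed tokens of Claim~\ref{lem:3k-unhappy-swaps} give $\sum_t w(t)\geq L + 6k = 33k$; equality forces every misplaced token to follow a geodesic, and you then check structurally that no geodesic from $a(i,j)$ to $b(i,j)$ (length $4$) can afford the two crossing edges needed to leave and re-enter $L_i$. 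Both proofs ultimately rest on the same two ingredients (the tight budget plus the fact that $N(a(i,j))$ consists only of private vertices and that $L_i$ is separated from the rest of $G$ by crossing edges alone), but your token-centric count yields a stronger and reusable conclusion --- \emph{every} misplaced token traces a shortest path, and every moving well-placed token is touched exactly twice --- which would also simplify the subsequent consistency argument (Claim~\ref{lem:consistency}); the paper's swap-type classification is more local but is the form in which the budget consequences are actually stated and reused there. One cosmetic point: you assert $T = 16.5k$ up front by citing the preceding analysis, but your own chain only needs $T \leq 16.5k$ on one side and the walk-length lower bound on the other, so it is cleaner to write $33k \geq 2T = \sum_t w(t) \geq 33k$ and let equality fall out.
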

\begin{inproof}{\ref{lem:locality}}
%
%
It should first be noted that the token $\loc(i,j)$ can only increase its distance to its destination by leaving $L_i$. 
Let $j_1 < j_2 < j_3$ be such that $(i,j_l) \in Y$ for every $l \in [3]$.
The distance of $\loc(i,j)$ to its destination is \emph{its distance to $b(i,j_1)$ plus $l-1$}.
Besides, $\loc(i,j)$ can only leave $L_i$ via a vertex $u(i,h,j')$ with $h \in [t]$ and $(i,j') \in Y$.
From this vertex, it can go to $u(j',h',i)$ for some $h' \in [t]$.
Now, the distance of $\loc(i,j)$ to $b(i,j_l)$ is $2$ if $l=3$, and at least $3$ otherwise.
In both cases, the swap that puts $\loc(i,j)$ cannot be happy.
Therefore, by the consequences of Claim~\ref{lem:3k-unhappy-swaps}, it has to be a swap with a well-placed token.
That means that this swap is at best a \xyswap{0}{+1}.  
This is only possible if it is a \xyswap{+1}{+1} between two well-placed tokens; hence, a contradiction.
\end{inproof}

\begin{claim}\label{lem:consistency}
For every $i \in [k]$, the $3$ tokens of $\U(i)$ which moved in solution $\s$, are in the same $\U(i,h_i)$, for some $h_i \in [t]$.
\end{claim}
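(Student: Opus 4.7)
The argument is a straightforward exploitation of the tight swap budget. From the consequences of Claim~\ref{lem:3k-unhappy-swaps}, we have $x=3k$ and $y=0$, so exactly $3k$ private-token displacements occur across the whole instance; since the three local tokens in each gadget $L_i$ have pairwise disjoint private neighborhoods, each forces at least one displacement, and we get \emph{exactly} three displaced private tokens per linker $L_i$. This will be the resource against which I budget.

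First, I would establish that every local token $\loc(i,j_p)$ follows a shortest path of length exactly $4$ from $a(i,j_p)$ to $b(i,j_p)$. This is a direct double count: the total number of token-movements is $2\ell=33k$; and summing the per-token movement minima (distance $4$ for each of the $3k$ local tokens, distance $5$ for each of the $3k$ global tokens, and the out-and-back distance $2$ for each of the $3k$ displaced private tokens) already yields $4\cdot 3k+5\cdot 3k+2\cdot 3k=33k$. Hence each per-token minimum is saturated, so in particular every local token walks a shortest path.

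Next, I would classify these shortest paths. A local neighborhood analysis in the grid shows that a length-$4$ walk from $a(i,j_p)$ to $b(i,j_p)$ must enter the grid on a single row and stay on it until exiting through the rightmost column to $b(i,j_1)$: there are no inter-row edges among grid vertices, the only egress from the grid to the starting path runs through column $j_3$, and the only ingress from the finishing set is through the matching column. Thus for each $p$ there is an $h_p\in[t]$ such that $\loc(i,j_p)$ traverses exactly the grid vertices $\{u(i,h_p,j_q): p\le q\le 3\}$. Let $\mathcal{V}=\bigcup_{p\in[3]}\{u(i,h_p,j_q):p\le q\le 3\}$ be the union over the three local tokens. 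Each vertex of $\mathcal{V}$ is initially occupied by a private token that must be displaced (since a local token eventually occupies that vertex), so $L_i$ hosts at least $|\mathcal{V}|$ displaced private tokens; by Step~$1$, $|\mathcal{V}|\le 3$. A one-line case analysis on the triple $(h_1,h_2,h_3)$ gives $|\mathcal{V}|=3$ precisely when $h_1=h_2=h_3$ and $|\mathcal{V}|\ge 4$ otherwise, forcing $h_1=h_2=h_3=:h_i$, which is the claim.

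The one subtlety I expect is justifying that each $v\in\mathcal{V}$ really charges a \emph{distinct} private-token displacement rather than an arrival after some other misplaced token has already evicted the private occupant. This is ruled out by the budget: any such ``predecessor'' eviction would itself require additional grid-entering swaps, and the first step's movement count leaves no slack for extra traffic. A brief bookkeeping argument, verifying that every first visit to a grid vertex in the optimal solution must be with its original private token still in place, closes the gap.
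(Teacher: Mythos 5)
Your argument is correct in substance, but it reaches Claim~\ref{lem:consistency} by a genuinely different route than the paper. The paper gets there via Claim~\ref{lem:locality} (no local token leaves $L_i$), which it proves by the \xyswap{x}{y} classification: it tracks $\loc(i,j_2)$, argues its only admissible route to $b(i,j_2)$ passes through $u(i,h_i,j_3)$, then repeats with $\loc(i,j_1)$ and invokes ``exactly $3$ moved tokens per $\U(i)$''. You instead replace the swap-type bookkeeping by a per-token move-count: double-counting token-movements against $2\ell=33k$ and summing the minima $4\cdot 3k+5\cdot 3k+2\cdot 3k$ forces every moving token onto a geodesic, after which locality comes for free and the final counting ($3+2+1$ visited grid vertices collapsing to $3$) is the same as the paper's. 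Your version is more self-contained --- it subsumes Claim~\ref{lem:locality} rather than citing it --- at the cost of re-deriving tightness that the paper has already extracted.

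Three points to tighten. First, the paper's analysis does not literally give $x=3k$ and $y=0$; it gives $x+2y=3k$ and, what you actually use, that exactly $3k$ initially well-placed tokens move, exactly $3$ in each $\U(i)$. Quote that consequence instead. Second, your ``no inter-row edges among grid vertices'' is true inside $L_i$ but ignores the inter-gadget edges $u(i,h,j)u(j,h',i)$; to conclude that a length-$4$ walk stays on one private path you must also check that crossing into $L_j$ puts $\loc(i,j_p)$ at distance at least $4$ from $b(i,j_p)$, so no geodesic uses such an edge. This is a one-line distance computation, but it is needed and your stated reasons do not cover it. Third, the ``subtlety'' you flag at the end dissolves if you charge each visited grid vertex $v$ to the private token \emph{initially} placed on $v$: either that token is still on $v$ when the local token arrives (and is displaced by that very swap), or it has already moved; in both cases it is one of the $\U(i)$-tokens that move in $\s$, and distinct vertices charge distinct tokens. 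No further bookkeeping about ``first visits'' is required.
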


\begin{inproof}{\ref{lem:consistency}}
Let $j_1 < j_2 < j_3$ such that $(i,j_1), (i,j_2),$ and $(i,j_3)$ are all in $Y$.
Consider the token $\loc(i,j_2)$.
It first moves to a vertex $u(i,h_i,j_2)$ (for some $h_i \in [t]$).
By Claim~\ref{lem:locality}, its only way to its destination $b(i,j_2)$ is via $u(i,h_i,j_3)$.
This means that the token initially well-placed on $u(i,h_i,j_3)$ is one of those $3$ tokens of $\U(i)$ which moved.
Now, by considering the token $\loc(i,j_1)$, the same argument shows that the three tokens of $\U(i)$ which are moved by solution $\s$ are $u(i,h_i,j_1)$, $u(i,h_i,j_2)$, and $u(i,h_i,j_3)$.
\end{inproof}

We now claim that $\{v_{1,h_1},v_{2,h_2},\ldots,v_{k,h_k}\}$ is a solution to the \msi instance.
Indeed, for any $(i,j) \in Y$, $\glo(i,j)$ has to go to $a(j,i)$.
By Claim~\ref{lem:consistency}, it has to be via vertices of $\U(i,h_i)$ and $\U(j,h_j)$, and there is an edge between those two sets only if $v_{i,h_i}v_{j,h_j} \in E(H)$. 

The graph $G$ has $3(t+2)k$ vertices and $O(t^2k^2)$ edges.
We recall that $\ell = O(k)$.
Therefore, any algorithm solving \tok in time $f(\ell)(|V(G)|+|E(G)|)^{o(\ell / \log \ell)}$, for some computable function $f$, could be used to solve \msi in time $f'(k)(|V(H)|+|E(H)|)^{o(k / \log k)}$; and would contradict the ETH.
This completes the proof of Theorem~\ref{th:param-k-ts}. 
\end{proof}

\section{Token Swapping on nowhere-dense classes of graphs}\label{sec:nowheredense}
As we have seen in Section~\ref{sec:param}, there is little hope for an FPT algorithm for \tok (parameterized by $k$), unless \fpt = \wone. Now let us show that FPT algorithms exist, if we restrict our input to nowhere-dense graph classes.

To define nowhere-dense graphs, first let us introduce a notion of a {\em shallow minor}.
A shallow minor of a graph $G$ at depth $d$ is a subgraph of a graph obtained from $G$ by contracting subgraphs of $G$, each of radius at most $d$, into single vertices, and removing loops and multiple edges. A class $\cal G$ is nowhere-dense if for every $d$ the class of shallow minors at depth $d$ of graphs in $\cal G$ has bounded clique number.
For more information about this topic, we refer the reader to  the comprehensive book of Ne\v{s}et\v{r}il and Ossona de Mendez \citep[Chapter 13]{Sparsity}.

As graphs with bounded degree are nowhere-dense, this result generalizes Proposition~\ref{prop:fpt-delta}.

\FPTnowheredense*

\begin{proof}
If we are able to express  \stok  as a first-order formula, then the result follows immediately from the meta-theorem by Grohe, Kreutzer, and Siebertz \cite{Grohe2014}.

\begin{theorem}
[Grohe, Kreutzer, and Siebertz~\cite{Grohe2014}]
For every nowhere-dense class $C$ and every  $\varepsilon > 0$, every property of graphs definable by a first-order formula $\varphi$ can be decided in time $O(f(|\varphi|,\varepsilon) \cdot n^{1+\varepsilon} )$ on $C$, where $f$ is some function depending only on $\varphi$ and $\varepsilon$.
\end{theorem}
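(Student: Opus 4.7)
My plan is to reduce the decision problem to first-order model-checking on a mildly enriched version of $G$, of formula size depending only on $k$, whose Gaifman graph is still in a nowhere-dense class; the cited Grohe--Kreutzer--Siebertz meta-theorem then yields the FPT algorithm.

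The enrichment adds to $G$ a single binary predicate $A$ defined by
\[A(v,w) \iff v \in D(t_w) \text{ and } \dist_G(v,w) \leq k,\]
where $t_w$ denotes the token initially at $w$. The distance cutoff is harmless: in any solution of length $\leq k$ no token can travel more than $k$ edges, so we never need to query $A$ outside $k$-balls. It is also the crucial ingredient for nowhere-density preservation, since $A$ is then a subset of the edge set of the $k$-th power $G^k$, and the class of $k$-th powers of a nowhere-dense class remains nowhere-dense; hence the Gaifman graph of the enriched structure lies in a nowhere-dense class (with parameters depending on $k$). The predicate $A$ can be precomputed in polynomial time by a BFS truncated at depth $k$ from every vertex.

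Next I would construct a first-order sentence $\varphi_k$ of size $O(k)$ asserting existence of a solution of length at most $k$. The sentence existentially quantifies $k$ swap pairs $(x_1,y_1),\dots,(x_k,y_k)$, each constrained by $E(x_i,y_i)$. Then, universally quantifying over $v$, it existentially quantifies a backward trajectory $w_0,\dots,w_k$ of the token ending at $v$: require $w_k=v$ and, for each $i$, either $(w_i=x_i \wedge w_{i-1}=y_i)$, $(w_i=y_i \wedge w_{i-1}=x_i)$, or $(w_{i-1}=w_i \wedge w_i \notin \{x_i,y_i\})$. Given the swap sequence, these clauses uniquely determine $(w_0,\dots,w_k)$, so $w_0$ is exactly the initial position of the token ending at $v$, and the acceptability of the final configuration at $v$ is captured by $A(v,w_0)$. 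In particular, if $v$ is not involved in any swap then the trajectory collapses to $w_0=v$ and the check reduces to $A(v,v)$, i.e.\ the initial token is accepted at $v$.

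Applying Grohe--Kreutzer--Siebertz to the $O(k)$-size sentence $\varphi_k$ on the nowhere-dense enriched structure gives an $O(f(k,\varepsilon)\,n^{1+\varepsilon})$-time decision procedure for every $\varepsilon>0$; combined with the polynomial preprocessing for $A$, this is FPT in $k$, proving the theorem. The main obstacle I foresee is nowhere-density preservation once one records destination information: a naive binary \emph{accept} relation could make the structure dense, and the fix is the distance-$k$ truncation so that $A \subseteq E(G^k)$. A secondary point is that the inner existential over trajectories must have a unique solution (given the outer existentially quantified swap sequence) in order to correctly compute the final configuration; this is immediate from the deterministic swap dynamics, tracing backwards from $w_k=v$.
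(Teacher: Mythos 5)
Your overall strategy is the same as the paper's: encode the existence of a length-$\le k$ solution as a first-order sentence whose size is bounded by a function of $k$, and invoke the Grohe--Kreutzer--Siebertz meta-theorem. Your sentence (existentially quantified swap pairs, then a universally quantified vertex whose backward trajectory is traced deterministically) is a legitimate and in fact more compact alternative to the paper's formula, which instead existentially traces the at most $2k$ moved tokens and disposes of the untouched ones with the single universal check $target(x,x)$. The backward-determinism argument you give is sound; the only cosmetic issue is that your sentence encodes \emph{exactly} $k$ swaps, so, as in the paper, you need a disjunction over $i \le k$ to express ``at most $k$''.

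The genuine gap is in the step you yourself single out as crucial: the claim that the Gaifman graph of the enriched structure stays in a nowhere-dense class because $A \subseteq E(G^k)$ and ``the class of $k$-th powers of a nowhere-dense class remains nowhere-dense.'' That claim is false. The star $K_{1,n}$ lies in the class of trees, which is nowhere dense (indeed of treewidth one), yet $K_{1,n}^2 = K_{n+1}$; so already the squares of trees contain all cliques and form a somewhere-dense class. Concretely, on a star with $k \ge 2$ where every token accepts every leaf, your relation $A$ contains a clique on all the leaves, so the Gaifman graphs of your enriched structures do not form a nowhere-dense class and the meta-theorem cannot be applied to them. (The distance-$k$ truncation only controls things on bounded-degree classes, where it is not needed.) To be fair, the paper's own proof introduces the untruncated binary relation $target(x,y)$ and applies the meta-theorem without commenting on the Gaifman graph of the resulting structure, so it does not supply the missing argument either; but your explicit justification, as written, does not close the hole, and your proof is incomplete at exactly the point you identified as the main obstacle.
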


We will define the instance of \stok as a first-order formula $\Phi_{\leq k}$ of size $O(k^4)$. Recall that if the length of an optimal solution is $k$, then at most $2k$ tokens are swapped. In our formula variables will denote vertices of $G$. The relation $edge(x,y)$ denotes the existence of an edge $xy$.
The subsets of possible destinations of tokens will be represented by relation $target(x,y)$, which means that the vertex $y$ is a possible destination for the token initially starting on vertex $x$. Moreover, each token will be identified by its initial position.

Let $\Phi_k$ denote the formula encoding the solution of  \stok  with {\bf exactly} $k$ swaps. If we are interested in a solution using at most $k$ swaps, it is given by $\Phi_{\leqslant k} = \bigvee_{i=1}^k \Phi_i$.

We use variables to represent:
\begin{enumerate}
\item the ``traced'' tokens $t_1,t_2,\ldots,t_{2k}$ that are involved in the solution (some of them may stay intact, if the solution uses less than $2k$ tokens),
\item the final positions $dest_1,dest_2,\ldots,dest_{2k}$ of the ``traced'' tokens ($dest_j$ is the final position of token $t_j$),
\item the swaps $s_1^1,s_1^2,\ldots,s_k^1,s_k^2$ (in the $i$-th swap we exchange the tokens on edge $s_i^1s_i^2$),
\item the tokens that are swapped in the $i$-th swap for $i=1,2,\ldots,k$ -- by $st_i^1,st_i^2$ we denote the tokens that were swapped in the $i$-th swap, i.e., $st_i^p$ denotes the token on vertex $s_i^p$ before performing the $i$-th swap,
\item the positions of ``traced'' tokens in each round -- $pos_{j,i}$ is the vertex, where token $t_j$ is after $i$-th swap.
\end{enumerate}
Now we are ready to present the formula $\Phi_k$.
\allowdisplaybreaks
\begin{flalign}
\Phi_k = &\exists  (t_1, t_2,\ldots,t_{2k}) \label{line:start-def}\\
&\exists  (dest_1, dest_2,\ldots,dest_{2k}) \\
&\exists  (st_1^1,st_1^2, st_2^1,st_2^2,\ldots,st_k^1,st_k^2) \\
&\exists  (s_1^1,s_1^2, s_2^1,s_2^2,\ldots,s_k^1,s_k^2) \\
&\exists  (pos_{1,0},pos_{2,0},\ldots,pos_{2k,0}) \\
&\exists  (pos_{1,1},pos_{2,1},\ldots,pos_{2k,1}) \\
&\exists  (pos_{1,2},pos_{2,2},\ldots,pos_{2k,2}) \\
& \vdots \\
&\exists   (pos_{1,k},pos_{2,k},\ldots,pos_{2k,k}) \label{line:end-def}\\
&\forall(x) (\bigwedge_{j=1}^{2k} x \neq t_j) \to target(x,x) \label{line:not-touched}\\
&\land  \bigwedge_{j=1}^{2k}\bigwedge_{j'=1}^{2k} (j \neq j' \to t_j \neq t_{j'} ) \label{line:different} \\
&\land  \bigwedge_{j=1}^{2k} target(t_j,dest_j)\label{line:good-dest}\\
&\land  \bigwedge_{i=1}^k edge(s_i^1,s_i^2) \label{line:good-swap} \\
&\land  \bigwedge_{j=1}^{2k} pos_{j,0} = t_j \label{line:pos0}\\
&\land  \bigwedge_{j=1}^{2k} pos_{j,k} = dest_j \label{line:posk}\\
&\land  \bigwedge_{i=1}^{k} \left(  \bigvee_{j=1}^{2k}  st_i^1 = t_j \land pos_{j,i} = s_i^1\right) \label{line:synchro1} \\
&\land  \bigwedge_{i=1}^{k} \left( \bigvee_{j=1}^{2k}  st_i^2 = t_j \land pos_{j,i} = s_i^2\right) \label{line:synchro2}\\
&\land  \bigwedge_{i=1}^{k}  \bigwedge_{j=1}^{2k} \left( \lnot (st_i^1 = t_j \lor st_i^2 = t_j) \to pos_{j,i+1} = pos_{j,i} \right) \label{line:stays}\\
&\land  
\bigwedge_{i=1}^{k} 
\bigwedge_{j=1}^{2k}  
\bigwedge_{j'=1}^{2k} \label{line:swaps1} \\
& \left( (j \neq j' \land st_i^1 = t_j \land st_i^2 = t_{j'}) \to (pos_{j,i+1} = pos_{j',i} \land pos_{j',i+1} = pos_{j,i}) \right) \label{line:swaps2}
\end{flalign}
In lines \ref{line:start-def}--\ref{line:end-def} we define the variables. Line \ref{line:not-touched} says that the tokens that are not involved in any swaps are already at feasible positions. Line \ref{line:different} ensures that the traced tokens are pairwise different. Lines \ref{line:good-dest} and \ref{line:good-swap} say that the final positions of traced tokens should be feasible, and we can perform swaps only on edges. In lines \ref{line:pos0} and \ref{line:posk} we synchronize the values of variables $pos_{j,0}$ and $pos_{j,k}$ with variables $t_j$ and $dest_j$. In lines \ref{line:synchro1} and \ref{line:synchro2} we synchronize the values of variables $sp_i^1,sp_i^2$ and $s_i^1,s_i^2$. In line \ref{line:stays} we make sure that the tokens that are not involved in the current swap, stay on their positions. Finally, in line~\ref{line:swaps1} and~\ref{line:swaps2} , we say that the tokens involved in the current swap exchange their positions.
\end{proof}

We derive the following corollary.

\FPTcor*
\bigskip
To see Corollary~\ref{cor:param-k}~{\em(\ref{k+tree})}, recall that bounded-treewidth graphs are nowhere-dense. Therefore by Theorem \ref{th:param-k-lbtw-ts} there exists an algorithm with running time $O(f(k) n^{1+\varepsilon})$, for any $\varepsilon >0$ and treewidth bounded by some constant $c$. Observe that the constant hidden in the big-O notation depends on the constant $c$. In particular $c$ has no influence on the exponent of $n$.

\section{Token Swapping on almost trees}\label{sec:almosttrees}
This section is devoted to the proof of the following theorem.

\HardAlmostTrees*
\begin{proof}
In \ecover, we are given a finite family, denoted by 
$\mathcal S=\{S_1,S_2,\ldots,S_m\}$, of $3$-element subsets of the universe $X=\{x_1, x_2, \ldots, x_n\}$, where $3$ divides $n$.
The goal is to find $n/3$ subsets in $\mathcal S$ that partition (or here, equivalently, cover) $X$. 
The problem can be seen as a straightforward generalization of the \dm problem. This problem is NP-complete and has no $2^{o(n)}$ algorithm, unless the ETH fails, even if each element belongs to exactly 3 triples \cite{Gonzalez85,Bodlaender2015}.
Therefore we can reduce from the restriction of the \ecover problem, where each element belongs to  $3$ sets of~$\mathcal S$, and obviously $|{\cal S}| = |X| = n$.

\paragraph{\textbf{Construction.}} For each set $S_j \in \mathcal S$, we add a \emph{set gadget} consisting of a tree on $10$ vertices (see Figure~\ref{fig:tok-tw2-set-gadget}).
In the set gadget, the four gray tokens should cyclically swap as indicated by the dotted arrows: $g^j_i$ shall go where $g^j_{i+1}$ is, for each $i \in [4]$ (addition is computed modulo 4).
The three black tokens, as usual, are initially well placed.
The three remaining vertices are called \emph{element} vertices. 
They represent the three elements of the set.
The tokens initially on the \emph{element} vertices are called \emph{element} tokens.
For each element of $X$, there are $3$ \emph{element} tokens and $3$ \emph{element} vertices.

\begin{figure}[h!]
\centering
\includegraphics{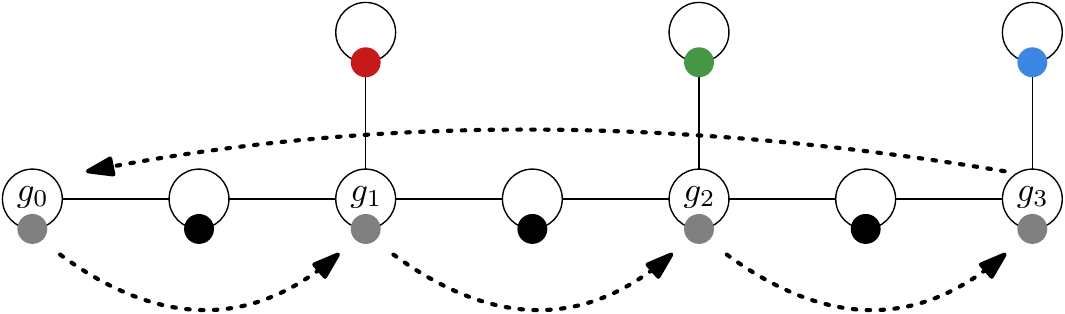}
\caption{The set gadget for 
red, green and blue.
We voluntarily omit the superscript $j$.}
\label{fig:tok-tw2-set-gadget}
\end{figure}

We add a vertex $c$ that is linked to all the \emph{element} vertices of the set gadgets and to all the vertices $g^j_0$.
Each token originally on an \emph{element} vertex should cyclically go to \emph{its next occurrence} (see Figure~\ref{fig:tok-tw2-overall}).
The token initially on $c$ is well placed (it could be drawn as a black token).

\begin{figure}[h!]
\centering
\includegraphics{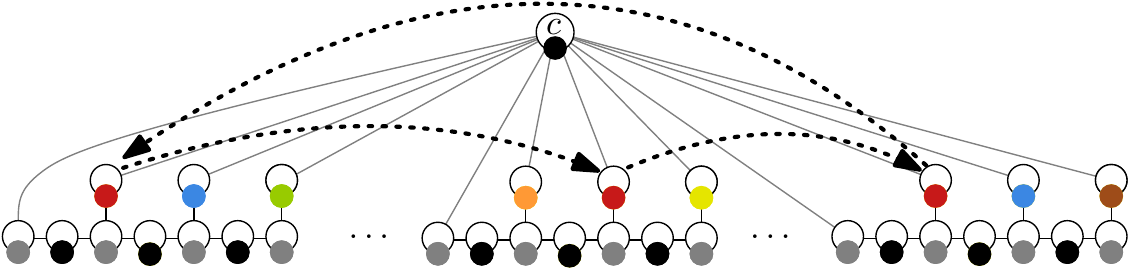}
\caption{The overall picture. Each element appears exactly $3$ times, so there are $3$ red tokens.}
\label{fig:tok-tw2-overall}

\end{figure}

The constructed graph $G$ has $10n+1$ vertices.
If one removes the vertex $c$ the remaining graph is a forest, which means that the graph has a feedback vertex set of size $1$ and, in particular, treewidth $2$.
$G$ has its diameter bounded by $6$, since all the vertices are at distance at most $3$ of the vertex $c$. 
We now show that the instance $\cal S$ of \ecover admits a solution if and only if there exists a solution for our instance of \tok of length at most $\ell := 11 \cdot n/3+9 \cdot 2n/3 + 2n=35n/3=11n+2n/3$.

\paragraph{\textbf{Soundness.}}
The correctness of the construction relies mainly on the fact that there are two competitive ways of placing the gray tokens.
The first way is the most direct. 
It consists of only swapping along the \emph{spine} of the set gadget.
By \emph{spine}, we mean the $7$ vertices initially containing gray or black tokens.
From hereon, we call that \emph{swapping the gray tokens internally}.
\begin{claim}\label{lem:internally}
Swapping the gray tokens internally requires $9$ swaps.
\end{claim}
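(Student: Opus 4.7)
The plan is to establish matching upper and lower bounds of $9$ spine-only swaps. For the upper bound, I would read off the explicit structure of the spine from Figure~\ref{fig:tok-tw2-set-gadget} and describe a concrete sequence of $9$ swaps that cyclically permutes the grays while restoring every black token to its original position. The natural strategy is to bubble one designated gray, say $g^j_4$, step-by-step along the spine toward the vertex currently held by $g^j_1$, so that along the way each of the three other grays is pushed exactly one position forward in the cycle. A direct bookkeeping check confirms that $9$ swaps suffice and that every black token returns to where it started.

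For the lower bound, the key observation is that restricting swaps to the spine reduces the problem to a self-contained instance of \tok on the $7$-vertex subtree induced by the spine---which, as visible in Figure~\ref{fig:tok-tw2-set-gadget}, is a path $P_7$ with grays at the four odd positions and blacks at the three even positions. Two clean routes are available. The first is to invoke the classical fact (see Knuth~\cite{Knuth}) that on a path the minimum number of token swaps equals the number of inversions of the induced permutation; encoding each token by its destination index along $P_7$, the target permutation is $(3,2,5,4,7,6,1)$, whose inversions can be counted directly to be $9$. The second route uses Proposition~\ref{prop:lower}: the sum of distances is $L = 3\cdot 2 + 1\cdot 6 = 12$ (three grays travel distance $2$ and one gray travels distance $6$), giving an immediate bound of $L/2=6$; one then argues that each of the three initially well-placed black tokens must take part in at least one non-happy swap, because a black token's first involvement in any swap strictly increases its distance to its destination. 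This forces at least three additional swaps on top of the six happy ones, totalling $9$.

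The main obstacle is making the second route fully rigorous: a priori, two distinct blacks could share their excess displacement through a single \xyswap{+1}{+1}, and one must rule this out. On the spine, consecutive blacks are separated by a gray-occupied vertex, so no two blacks are ever adjacent and the three units of excess displacement cannot be amortised into fewer swaps; hence each black contributes independently. The inversion-counting route sidesteps this bookkeeping entirely and would be my preferred presentation, using the happy-swap analysis only as a sanity check.
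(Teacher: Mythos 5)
Your proposal is correct, and your upper-bound construction is essentially the paper's: bubble the last gray token along the spine to the far end ($6$ swaps, shifting every other spine token one step forward), then finish each of the remaining three grays with one swap apiece, restoring the blacks. Where you genuinely diverge is that the paper's proof of Claim~\ref{lem:internally} consists \emph{only} of this construction -- it defines ``swapping internally'' as a named procedure and counts its length -- whereas you also prove a matching lower bound for any spine-only strategy. Your preferred route, reading the spine as $P_7$ with destination word $(3,2,5,4,7,6,1)$ and counting $9$ inversions, is clean, self-contained, and actually does useful work: the soundness argument later charges at least $9$ swaps to every gadget handled internally, a bound the paper's stated proof of the claim does not itself establish (it is absorbed into the informal ``one can check that all those alternative ways take $11$ swaps or more''). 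So your presentation is, if anything, more complete than the paper's.

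One concrete flaw in your second route: the assertion that ``no two blacks are ever adjacent'' is false as a statement about the whole process -- a black starting at position $2$ can be swapped to position $3$ and then sits next to the black at position $4$, so a \xyswap{+1}{+1} between two blacks cannot be excluded by adjacency alone. The argument is nonetheless salvageable by counting token-moves rather than swaps: each of the three blacks must leave its destination at least once (any move of a well-placed token on a tree is a $+1$ for it), so among the $2s$ token-moves of an $s$-swap solution there are at least $3$ moves of value $+1$ and hence at least $12+3=15$ of value $-1$, giving $2s \geq 18$ and $s \geq 9$ regardless of how the $+1$'s are packed into swaps. Since you flag this route as a sanity check only and rest the proof on the inversion count, the proposal stands.
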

\begin{inproof}{\ref{lem:internally}}
In $6$ swaps, we can first move $g_3$ to its destination (where $g_0$ is initially).
Then, $g_0$, $g_1$, and $g_2$ need one additional swap each to be correctly placed.
We observe that, after we do so, the black tokens are back to their respective destination.
\end{inproof}

We call the second way \emph{swapping the gray tokens via $c$}.
Basically, it is the way one would have to place the gray tokens if the black tokens (except the one in $c$) were removed from the graph.
It consists of, first (a) swapping $g_0$ with the token on $c$, then moving $g_0$ to its destination, then (b) swapping $g_1$ with the current token on $c$, moving $g_1$ to its destination, (c) swapping $g_2$ with the token on $c$, moving $g_2$ to its destination, finally (d) swapping $g_3$ with the token on $c$ and moving it to its destination.
\begin{claim}\label{lem:via-c}
Swapping the gray tokens via $c$ requires $11$ swaps.
\end{claim}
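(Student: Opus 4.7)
The plan is to verify the claim by direct simulation of the described four-phase procedure. I will introduce names for the seven spine vertices $v_1, v_2, \ldots, v_7$ of the set gadget, carrying in order the tokens $g_0, b_0, g_1, b_1, g_2, b_2, g_3$, and take $c$ to be adjacent to $v_1$ inside the gadget (as only the vertex initially holding $g_0$ is linked to $c$). Under this labelling, I will walk through phases (a)--(d) one swap at a time, tracking the positions of every token after each step.

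Each of the four phases opens with a swap that routes the current gray token $g_i$ through $c$, contributing exactly one ``$c$-swap'' per phase and four in total. Each phase then closes with a short sequence of spine swaps that pushes $g_i$ to its eventual destination. A careful tally shows that these routing swaps amount to three in phase (a), three in phase (b), three in phase (c), and two in phase (d), so when the $c$-swaps are folded in, the four phases together account for $11$ swaps. In particular, precisely two more swaps are incurred than in Claim~\ref{lem:internally}, which is the ``overhead'' of detouring each gray token through $c$.

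The main obstacle is bookkeeping rather than any substantial idea. I will need to verify, swap by swap, that each pair of vertices involved is actually adjacent in the gadget, that the black tokens are back on their initial (and destination) vertices at the end of the procedure, and that $c_0$ returns to $c$. Presenting the intermediate configurations in a compact table would make this checking routine, and the final count of $11$ then follows by direct summation.
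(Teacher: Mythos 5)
Your plan has a genuine gap rooted in a misreading of the gadget's geometry. You assume that ``only the vertex initially holding $g_0$ is linked to $c$,'' but the construction explicitly makes $c$ adjacent to \emph{all three element vertices} of each set gadget as well as to $g^j_0$. The whole point of the ``via $c$'' strategy is that a gray token, once on $c$, reaches its destination in two further swaps by cutting through an element vertex ($c \to$ element vertex $\to$ spine), rather than walking back along the spine. Under your adjacency assumption this route does not exist: after $g_0$ is swapped onto $c$, it would have to return through $v_1$ and traverse the spine, costing $1+3=4$ swaps for phase (a) alone, and the total would not come out to $11$. The detour through the element vertices is also what parks an element token on $c$ at the end of phases (a), (b), (c) --- the property the surrounding argument actually exploits --- and your simulation would never produce it. The paper's proof is simply the correct per-phase count: phases (a), (b), (c) each cost $3$ swaps (one swap onto $c$, two swaps through an element vertex to the destination), and phase (d) costs $2$ (one swap onto $c$, one swap to $g_0$'s old vertex, which is adjacent to $c$), giving $3+3+3+2=11$.

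There is also an arithmetic inconsistency in your tally as written: you count four ``$c$-swaps'' (one per phase) \emph{plus} routing swaps of $3,3,3,2$, which sums to $15$, not $11$. Either the per-phase figures $3,3,3,2$ must already include the $c$-swaps (in which case nothing is left to ``fold in''), or the routing swaps are $2,2,2,1$. Finally, note that the proposal defers all of the actual verification (``I will need to verify, swap by swap\ldots''), so as it stands no count has been established.
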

\begin{inproof}{\ref{lem:via-c}}
Steps (a), (b), and (c) take $3$ swaps each, while step (d) takes $2$ swaps. 
\end{inproof}

Considering that swapping the gray tokens via $c$ takes $2$ more swaps than swapping them internally, and leads to the exact same configuration where both the black tokens and the \emph{element} tokens are back to their initial position, one can question the interest of the second way of swapping the gray tokens.
It turns out that, at the end of steps (a), (b), and (c), an \emph{element} token is on vertex $c$.
We will take advantage of that situation to perform two consecutive happy swaps with its two other occurrences.
By doing so, observe that the first swap of steps (b), (c), and (d) are also happy and place the last occurrence of the \emph{element} tokens at its destination.   

We assume that there is a solution $S_{a_1}, \ldots, S_{a_{n/3}}$ to the \ecover instance.
In the corresponding $n/3$ set gadgets, swap the gray tokens via $c$ and interleave those swaps with doing the two happy swaps over \emph{element} tokens, whenever such a token reaches $c$.
By Claim~\ref{lem:via-c}, this requires $11 \cdot n/3 + 2n$ swaps.
At this point, the tokens that are misplaced are the $4 \cdot 2n/3$ gray tokens in the $2n/3$ remaining set gadgets.
Swap those gray tokens internally.
This adds $9 \cdot 2n/3$ swaps, by Claim~\ref{lem:internally}.
Overall, this solution consists of $29n/3+2n=35n/3=\ell$.   

Let us now suppose that there is a solution $\s$ of length at most $\ell$ to the \tok instance.
At this point, we should observe that there are alternative ways (to Claim~\ref{lem:internally} and Claim \ref{lem:via-c}) of placing the gray tokens at their destination.
For instance, one can move $g_3$ to $g_1$ along the spine, place tokens $g_2$ and $g_3$, then exchange $g_0$ with the token on $c$, move $g_0$ to its destination, swap $g_3$ with the token on $c$, and finally move it to its destination.
This also takes $11$ swaps but moves only one \emph{element} token to $c$ (compared to moving all three of them in the strategy of Claim~\ref{lem:via-c}).  
One can check that all those alternative ways take $11$ swaps or more.
Let $r \in [0,n]$ be such that $\s$ does \emph{not} swap the gray tokens internally in $r$ set gadgets (and swap them internally in the remaining $n-r$ set gadgets).
The length of $\s$ is at least $11r+9(n-r)+2(n-q)+4q=11n+2(r+q)$, where $q$ is the number of elements of $X$ for which \emph{none} occurrence of its three \emph{element} tokens has been moved to $c$ in the process of swapping the gray tokens.
Indeed, for each of those $q$ elements, $4$ additional swaps will be eventually needed.
For each of the remaining $n-q$ elements, only $2$ additional happy swaps will place the three corresponding \emph{element} tokens at their destination.     
It holds that $3r \geqslant n-q$, since the \emph{element} tokens within the $r$ set gadgets where $\s$ does not swap internally represent at most $3r$ distinct elements of $X$.
Hence, $3r+q \geqslant n$.
Also, $\s$ is of length at most $\ell=11n+2n/3$, which implies that $r+q \leqslant n/3$.
Thus, $n \leqslant 3r+q \leqslant 3r+3q \leqslant n$.
Therefore, $q=0$ and $r=n/3$.
Let $S_{a_1},\ldots,S_{a_{n/3}}$ be the $n/3$ sets for which $\s$ does not swap the gray tokens internally in the corresponding set gadgets.
For each element of $X$, an occurrence of a corresponding \emph{element} token is moved to $c$ when the gray tokens are swapped in one of those gadgets.
So this element belongs to one $S_{a_i}$ and therefore $S_{a_1},\ldots,S_{a_{n/3}}$ is a solution to the instance of \ecover.

The ETH lower bound follows from the fact, that the size of constructed graph is linear in $n$.\end{proof}

\section{Variants of Token Swapping on stars, cliques, and paths}\label{sec:trees}
In this section we investigate the complexities of the variants of \tok on very simple classes of graphs.

Let us start with  defining an auxiliary digraph, which will be useful in coping with \ctok.
For an instance of \ctok on a graph $G$, we define the  {\em color digraph} $G^*$, whose vertices are colors of tokens on $G$, and arcs correspond to vertices of $G$.
The vertex $v$ corresponds to the arc $e(v) = cc'$, such that $c$ is the color of $v$ and $c'$ is the color of the token placed in $v$. Note that both loops and multiple arcs are possible. There is a very close relation between color digraphs and Eulerian digraphs.

\begin{observation}\label{obs:eulerian}
The following hold:
\begin{enumerate}[(i)]
\item if $G^*$ is the color digraph of some instance of \ctok, then every connected component of $G^*$ is Eulerian;
\item for every Eulerian digraph $H$ with $n$ edges, and for any graph $G$ with $n$ vertices, there exists an instance of \ctok on $G$, such that its color digraph $G^*$ is isomorphic to $H$.
\end{enumerate}
\end{observation}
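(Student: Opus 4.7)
The plan is to handle both parts by a degree-counting argument coupled with the standard characterization that a directed graph has all components Eulerian if and only if every vertex has equal in- and out-degree (and the underlying graph is connected per component).

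For part (i), the key observation is that the arc $e(v) = cc'$ contributes $+1$ to the out-degree of $c$ and $+1$ to the in-degree of $c'$. Hence, in the color digraph $G^*$, the out-degree of a color $c$ equals the number of vertices of $G$ having color $c$, while the in-degree of $c$ equals the number of tokens of color $c$. By the defining condition of \ctok, these two quantities coincide for every color $c$. Since each (weakly) connected component of a digraph where every vertex satisfies in-degree equal to out-degree admits an Eulerian circuit, every component of $G^*$ is Eulerian.

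For part (ii), I would proceed constructively. Fix an arbitrary bijection $\varphi: E(H) \to V(G)$. Regard the vertices of $H$ as a palette of colors. For each arc $a = uv \in E(H)$, assign the color $u$ to the vertex $\varphi(a) \in V(G)$, and place on $\varphi(a)$ a token of color $v$. This defines a coloring of $V(G)$ together with a placement of $|V(G)| = n$ colored tokens. To check that this is a valid \ctok instance, note that for any color $c$, the number of vertices of $G$ colored $c$ equals the number of arcs of $H$ leaving $c$, i.e.\ $\deg^+_H(c)$, while the number of tokens of color $c$ equals the number of arcs of $H$ entering $c$, i.e.\ $\deg^-_H(c)$. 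Since $H$ is Eulerian, $\deg^+_H(c) = \deg^-_H(c)$ for every $c$, so the instance is well-defined. By construction, the arc of $G^*$ corresponding to $\varphi(a)$ is exactly $a$, hence $G^* \cong H$.

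I do not anticipate a genuine obstacle: both directions reduce to the elementary fact that the \ctok balance condition (same number of tokens and vertices of each color) is the combinatorial dual of the Eulerian in/out-degree condition. The only minor subtlety is to state part (ii) carefully enough that the reader sees that the shape of $G$ is irrelevant (the color digraph only records how colors are paired at vertices, not the adjacencies of $G$), which the bijection $\varphi$ makes explicit.
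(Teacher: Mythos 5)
Your proposal is correct and follows essentially the same route as the paper: part (i) is the same in/out-degree count at each color vertex combined with the balance condition of \ctok, and part (ii) is the same construction of assigning, for each arc $cc'$ of $H$, color $c$ to a vertex of $G$ and placing a token of color $c'$ on it (your bijection $\varphi$ just makes the paper's informal assignment explicit). No gaps.
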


\begin{proof}
To see (i), consider a vertex $c$ of $G^*$. Its out-degree is the number of tokens placed on vertices with color $c$. The in-degree of $c$ is the number of tokens in color $c$. Thus the in-degree is equal the out-degree, from which (i) follows.

Now, to see (ii), consider a vertex $c$ of $G^*$, let $d$ be its out-degree (equal to the in-degree, as $G^*$ is Eulerian). Then in $G$ give the color $c$ to any $d$ vertices. Moreover, for each arc $cc'$ in $G^*$ we place a token in color $c'$ on a vertex in color $c$. We repeat this for every vertex $c$ in $G^*$, obtaining an instance of \ctok, whose color digraph is exactly $G^*$.
\end{proof}

Now consider a solution $\s$ for the instance of  \ctok  in $G$ and fix the destinations of tokens according to $\s$. We observe that the cycles in the permutation defined by these destinations correspond to circuits in $G^*$. Thus, when trying to find a solution for an instance of \ctok, we will first try to fix appropriate destinations (by analyzing circuits in $G^*$), and then we will solve the instance of \tok.

\subsection{Stars}

To prove the next theorem we will use the following result by Pak \cite{PAK1999}. We state in the language of tokens and swaps, although the original motivation of Pak was sorting a permutation by transpositions with the first element.

\begin{lemma}[Pak \cite{PAK1999}] \label{lem-pak}
Let $I$ be an instance of \tok on a star with $n$ leaves, with the initial configuration of tokens $\pi$.
If the decomposition of $\pi$ into cycles consists of one cycle involving the central vertex, $m$ cycles of length at least $2$, and $b$ cycles of length $1$, then the length of an optimal solution to $I$ is $n+m-b$.
\end{lemma}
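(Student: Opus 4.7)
The plan is to match the claimed bound $n + m - b$ by proving separately an upper and a lower bound.

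For the upper bound I would use a ``rotate through the center'' strategy, handling the cycles of $\pi$ independently. For a leaf cycle $(v_{i_1}\, v_{i_2}\, \ldots\, v_{i_\ell})$ of length $\ell \geq 2$, first swap the center with $v_{i_1}$ to push the current central token aside; then consecutively swap with $v_{i_2}, v_{i_3}, \ldots, v_{i_\ell}$, each swap correctly placing one token onto its destination leaf; finally swap with $v_{i_1}$ once more, which simultaneously places the remaining misplaced token and returns the original central token home. This uses $\ell + 1$ swaps per nontrivial leaf cycle. For the cycle of length $\ell_0$ containing the center, the same idea without the opening and closing ``center-protection'' swaps needs only $\ell_0 - 1$ swaps. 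Summing and using $\ell_0 + b + \sum \ell = n + 1$ (where the sum is over the $m$ nontrivial leaf cycles) produces exactly $n + m - b$ swaps.

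For the lower bound I would introduce the potential
\[
\Phi(\pi) \;:=\; (n - f(\pi)) + c(\pi),
\]
where $f(\pi)$ is the number of fixed leaves of $\pi$ and $c(\pi)$ is the number of leaf cycles of $\pi$ of length at least $2$. One checks immediately that $\Phi(\pi_{\text{start}}) = n + m - b$ and that $\Phi$ vanishes on the identity. A single swap on $v_0 v_i$ corresponds to composing the current permutation with the transposition $(0\, i)$, so it either splits one cycle (if $0$ and $i$ lie in the same cycle of $\pi$) or merges two (if they do not). By a short case analysis on the length of the piece containing $i$ in the split case, and on the length of the leaf cycle containing $i$ in the merge case, I would show that $\Phi$ changes by exactly $\pm 1$ at every swap. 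Since $\Phi$ must drop from $n + m - b$ to $0$ in unit steps, at least $n + m - b$ swaps are required.

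The main delicate point will be the potential-change casework, and in particular verifying that no configuration allows $\Phi$ to drop by more than $1$ in a single swap (which would spoil the matching with the upper bound); everything else reduces to elementary counting on cycle lengths and closely mirrors the standard arguments for minimum star-transposition factorizations of permutations.
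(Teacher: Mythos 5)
The paper does not actually prove this lemma: it is quoted verbatim from Pak \cite{PAK1999} and used as a black box in Theorem~\ref{th:stars-cts}, so there is no in-paper proof to compare against. Your blind proof is correct and self-contained. The upper bound accounting checks out: $\ell+1$ swaps per nontrivial leaf cycle, $\ell_0-1$ for the center's cycle, and $\ell_0+b+\sum\ell=n+1$ gives exactly $n+m-b$. The lower bound also goes through: writing the current configuration as a permutation $\sigma$ of the $n+1$ positions, a swap on $cv_i$ right-multiplies by the transposition $(c\,v_i)$ and hence splits or merges cycles; in each of the four subcases (merge a fixed leaf, merge a leaf cycle of length $\geq 2$, split off a fixed leaf, split off a leaf cycle of length $\geq 2$) exactly one of the two terms $n-f$ and $c$ changes, and by exactly $1$, so $\Phi$ moves in unit steps from $n+m-b$ down to $0$. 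The ``delicate point'' you flag is genuinely the whole content, but the case analysis is short and airtight because the two terms of $\Phi$ are never affected simultaneously (the center is neither a leaf nor part of a leaf cycle, so the fate of the piece containing the center is irrelevant to $\Phi$). This is essentially the standard potential argument for star-transposition distance, which is presumably what Pak's proof amounts to in the language of reduced decompositions.
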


\begin{theorem}\label{th:stars-cts}
\ctok can be solved in polynomial time on stars.
\end{theorem}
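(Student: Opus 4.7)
The plan is to reduce \ctok on a star $S_n$ (with center $c$ and $n$ leaves) to choosing an optimal destination assignment, and then to apply Pak's formula. Any solution for \ctok can be obtained by first fixing a bijection $\delta$ that sends each token to a vertex of its color, and then solving the induced instance of \tok; an optimal solution for \ctok corresponds to an optimal choice of $\delta$. By Lemma~\ref{lem-pak}, the optimal length for the induced \tok instance with permutation $\pi_\delta$ equals $n + m_\delta - b_\delta$, where $m_\delta$ counts the cycles of $\pi_\delta$ of length at least $2$ not containing $c$, and $b_\delta$ counts the fixed leaves. Hence the task reduces to maximizing $b_\delta - m_\delta$ over all color-respecting bijections $\delta$.

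Next I would exploit the color digraph $G^*$ defined just before the theorem. Each vertex of $S_n$ corresponds to an arc of $G^*$, and the cycles of $\pi_\delta$ correspond bijectively to the closed walks of an edge-partition (Eulerian decomposition) of $G^*$: a fixed point of $\pi_\delta$ becomes a loop used as a standalone closed walk, and a cycle of length $k\ge 2$ becomes a closed walk of length $k$. Writing $W_0$ for the walk containing the center arc $e(c)$, we therefore want an Eulerian decomposition of $G^*$ minimizing $m - b$, where $m$ counts the remaining closed walks of length $\ge 2$ and $b$ counts the remaining length-$1$ walks (loops distinct from $e(c)$ used standalone). Such a decomposition always exists by Observation~\ref{obs:eulerian}.

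The key structural claim is that an optimal Eulerian decomposition has a canonical form: every loop of $G^*$ distinct from $e(c)$ is used as a standalone walk; within each connected component $C_i$ not containing $e(c)$, the non-loop arcs form a single Eulerian circuit; and $W_0$ consists of $e(c)$ together with an Eulerian circuit of the non-loop subgraph of the component $C_0$ through $\chi(c)$ whenever that subgraph is non-empty (otherwise $W_0$ is just $e(c)$). The justification is a short local-exchange argument: absorbing a non-central loop into a longer walk decreases $b$ by $1$ while leaving $m$ unchanged; splitting the Eulerian circuit of a non-central component into more pieces strictly increases $m$; and absorbing additional non-central loops into $W_0$ only decreases $b$. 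I would verify these three cases are exhaustive by observing that every modification of the canonical decomposition can be decomposed into local moves of these three types.

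Once the structural lemma is in hand, the minimum of $n+m-b$ can be read off from $G^*$ in polynomial time: compute the weakly connected components of $G^*$, count the loops in each, detect the (unique) non-loop subgraph of each component, and treat $C_0$ separately according to whether $e(c)$ is a loop. An explicit swap sequence is then produced by constructing the relevant Eulerian circuits and running Pak's cycle-by-cycle rotation through the center as in the proof of Lemma~\ref{lem-pak}. The main obstacle is the exchange argument justifying the canonical form, since one must rule out subtle alternative decompositions in which loops and non-loop components interact inside the same walk.
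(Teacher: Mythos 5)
Your proposal is correct and follows essentially the same route as the paper: both reduce to \tok by fixing destinations, apply Pak's formula, and read off the optimum from a decomposition of the color digraph $G^*$ into one Eulerian circuit per component plus standalone loops. The paper sidesteps the exchange argument you flag as the main obstacle by first proving that any leaf initially holding a token of its own color (i.e.\ any non-central loop of $G^*$) may be assumed fixed and removed, after which $b=0$ and the lower bound follows simply from the fact that every cycle of the destination permutation corresponds to a circuit confined to a single component of $G^*$.
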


\begin{proof}
Let $G$ be a star with center $v_0$ and leaves $v_1,v_2,\ldots,v_n$. The color of the vertex $v$ will be denoted by $c(v)$. Also, let $c_0 := c(v_0)$.

First, suppose that there exists a leaf $v$, such that the token $t$ that is initially placed there has color $c(v)$ as well. 
Let $\s$ be an optimal solution and consider a permutation $\pi$ of tokens given by $\s$. 
We want to show that $\pi(v) = v$. Using the solution of Pak~\cite{PAK1999}, this implies that $t$ is never swapped.

For the purpose of contradiction, suppose $\pi(v) \neq v$. Then, there exists a token $t'$ initially on vertex $u$ with $\pi(u) = v$  and a vertex $w$ with $\pi(v) = w$. In other words, token $t$ ends at vertex $w$. So neither $u,v$ nor $w$ is involved in a $1$-cycle, but all three vertices must have the same color.
Thus we can alter this solution to a new permutation $\pi'$, by setting $\pi(v) = v$ and $\pi(u) = w$.
This increases the number $b$ of $1$-cycles by $1$ and the number $m$ stays the same.
This contradicts the optimality of $\s$ by Lemma~\ref{lem-pak} and we conclude 
$\pi(v)=v$.
%
Thus for any leaf $v$ with a token $t$ of color $c(v)$ holds, that the solution does not change
after removal of $v$.
Thus from now on we assume that no leaf $v$ contains a token colored with color $c(v)$.

Consider the color digraph $G^*$.  By the previous paragraph, we observe that with just one possible exception $c_0c_0$, it has no loops. Let $C_0,C_2,\ldots,C_m$ be the connected components of $G^*$, and let $c_0 \in C_0$. Moreover, for $i\geq 0$, by $p_i$ we denote the number of arcs in $C_i$. By Observation \ref{obs:eulerian}(i), the edges of $G^*$ can be decomposed (in polynomial time) into $m+1$ circuits (Eulerian circuits of its connected components). 

Let $e(v^i_1),e(v^i_2),\ldots,e(v^i_{p_i})$ be such a circuit for $C_i$, also we assume that $v^0_1 = v_0$ (i.e. we start the circuit for $C_0$ with the arc corresponding to $v_0$).
We construct the swapping strategy $\s$ by concatenating sequences $\s_i$, defined as follows:
\[
\s_i = 
\begin{cases}
v_0v^0_2,v_0v^0_3,\ldots,v_0v^0_{p_0} & \text{ for } i=0,\\
v_0v^i_1,v_0v^i_2, v_0v^i_3,\ldots,v_0v^i_{p_i} & \text{ for } i>0.
\end{cases}
\]
It is straightforward to verify that $\s$ is a solution for our problem and its length is $n+m$. We claim this solution is optimal.

To see this, consider any solution $\s'$. Let  us consider the instance of \tok obtained by fixing the destinations of all tokens, according to $\s'$. Let $q_0,q_1,\ldots,q_{m'}$ be the cycles in the permutation given by the destinations, and assume $q_0$ contains vertex $v_0$. By Lemma \ref{lem-pak}, the length of the optimal solution of this instance of \tok is exactly $n+m'$. We observe that the set of colors of vertices in each cycle has to be entirely contained in one of the components $C_i$, so $m' \geq m$, thus the length of $\s'$ is at least $n + m' \geq n+m$, which completes the proof.
\end{proof}

 \begin{theorem}\label{th:stars-sts}
 On stars, \stok remains \emph{NP}-hard and cannot be solved in time $2^{o(n)}$ unless the ETH fails, even for target sets of size at most $2$.
 \end{theorem}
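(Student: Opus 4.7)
I plan to reduce from the Hamiltonian Cycle (HC) problem on simple directed graphs where every vertex has in-degree and out-degree both equal to $2$. This variant is NP-hard and, by composing standard linear-blowup reductions (e.g., from $3$-SAT via cubic HC followed by a local vertex-splitting gadget that forces the $2$-in-$2$-out property), it inherits a $2^{\Omega(N)}$ lower bound under the ETH, where $N = |V(H)|$. The key to the reduction will be that Lemma~\ref{lem-pak} turns the cost of \stok on a star into a count of cycles of the chosen permutation, and that with target sets of size~$2$ we can precisely encode a ``choose one out-arc per vertex'' rule.

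Given such an $H$ on $N$ vertices (with no self-loops and no parallel arcs), I build a star $G$ with center $v_0$ and one leaf $\ell_v$ for each $v \in V(H)$, giving $n = N+1$ vertices in total. On $v_0$ I place a token with target set $\{v_0\}$, forcing it to stay; on each leaf $\ell_v$ I place a token with $D(t_v) = \{\ell_u : (v,u)\in E(H)\}$, which has size exactly $2$ by the $2$-out-regularity of $H$. The instance is always satisfiable: since $H$ is also $2$-in-regular, the bipartite ``sources vs.\ targets'' graph of its arcs is $2$-regular and thus admits a perfect matching, i.e., $H$ admits a cycle cover. Hence the constructed \stok instance has a solution, and the overall construction is linear in $N$.

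Any feasible assignment $\pi$ is precisely a cycle cover of $H$ together with the forced fixed point at $v_0$: each leaf $\ell_v$ selects one out-neighbor, and bijectivity of $\pi$ is equivalent to the chosen arcs forming a permutation of $V(H)$. By Lemma~\ref{lem-pak}, the minimum number of swaps realizing $\pi$ equals $N + m(\pi) - b(\pi)$, where $m(\pi)$ counts non-central non-trivial cycles and $b(\pi)$ counts leaf fixed points. Since $H$ has no self-loops, $b(\pi) = 0$; since the center alone forms the trivial central cycle, $m(\pi)$ equals exactly the number of cycles of the cycle cover of $H$ induced by $\pi$. Thus the optimum of the \stok instance is $N + c^\star(H)$, where $c^\star(H)$ is the minimum number of cycles over all cycle covers of $H$. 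Since $H$ is Hamiltonian if and only if $c^\star(H) = 1$, we conclude that $H$ has a Hamiltonian cycle if and only if the \stok instance admits a solution of length at most $N+1$.

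Because $n = \Theta(N)$, this establishes NP-hardness of \stok on stars with target sets of size at most $2$, and any $2^{o(n)}$-time algorithm would translate into a $2^{o(N)}$-time algorithm for HC on $2$-in-$2$-out digraphs, contradicting the ETH. The principal obstacle in writing out a complete proof is pinning down the $2^{\Omega(N)}$ ETH lower bound for this restricted HC problem; this is handled by explicitly describing a linear-sized vertex-splitting gadget that converts cubic HC (where the $2^{\Omega(N)}$ ETH bound is standard) into a $2$-in-$2$-out digraph while preserving Hamiltonicity, after which every remaining step of the argument is a direct application of Lemma~\ref{lem-pak}.
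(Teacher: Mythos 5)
Your construction is exactly the one in the paper (a star whose center carries a token pinned to itself and whose leaves are the vertices of a bounded-out-degree digraph, with $D(t_v)$ the out-neighborhood of $v$), but your correctness argument takes a genuinely different route. The paper argues directly: since the digraph has no loops, every leaf token must pass through the center at least once and one extra swap is needed to restore the center's token, so any solution has length at least $n+1$, and a solution of length exactly $n+1$ traces a closed walk $cv_1,\ldots,cv_n,cv_1$ that must be a Hamiltonian cycle. You instead observe that a feasible final assignment is precisely a cycle cover of $H$ plus the fixed center, and invoke Lemma~\ref{lem-pak} to read off the cost as $N$ plus the number of cycles of the cover; this is correct (the center's trivial cycle is the ``cycle involving the central vertex'' and is not counted in $b$, as the paper's own use of the lemma in Theorem~\ref{th:stars-cts} confirms), and it is arguably cleaner since it gives the exact optimum $N+c^\star(H)$ rather than just the threshold. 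What it buys you is a one-line equivalence ``Hamiltonian iff optimum is $N+1$''; what it costs you is a dependence on Pak's formula and on your stronger source problem. On that last point, your insistence on $2$-in-$2$-out regularity (and the accompanying vertex-splitting gadget you defer) is unnecessary: feasibility of the \stok instance is not needed for the reduction to be correct, because an infeasible instance has no solution at all and such a digraph has no cycle cover, hence no Hamiltonian cycle, so the ``iff'' holds vacuously. Reducing, as the paper does, from \textsc{Directed Hamiltonian Cycle} with out-degree at most $2$ (Plesn\'{\i}k), whose non-planar version already has a linear-size reduction giving the $2^{o(n)}$ ETH bound, removes the only genuinely unfinished step in your write-up.
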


\begin{proof}
We will reduce from the \textsc{Directed Hamiltonian Cycle} problem restricted to digraphs with out-degree at most 2, which is known to be NP-complete \cite{Plesnik79}.
Moreover, it follows from the proof that the problem cannot be solved in time $2^{o(n)}$, unless the ETH fails (the original proof
considers planar instances, but if we drop the planarity assumption, we obtain claimed lower bound).

Let $G=(V,E)$ be a digraph with all out-degrees at most 2, we can assume it has no loops. We will construct an instance $(G'=(V',E'), D)$ of \stok with $|D(v)| \leqslant 2$ for all $v \in V'$, that has a solution of length at most $n+1$ if an only if $G$ has a Hamiltonian cycle.

The set $V'$ is equal to $V \uplus \{c\}$ where $c$ is the center of the star, and the leaves are the vertices of $G$. For each $v \in V' \setminus \{c\}$, we set $D(v)=N_G(v)$ (the set of out-neighbors of $v$ in $G$) and $D(c)= \{c\}$.

Suppose $G$ has a Hamiltonian cycle $v_1,v_2,v_3,\ldots,v_n$ (with $v_1$ adjacent to $v_n$). It is easy to observe that the sequence $cv_1,cv_2,\ldots,cv_n,cv_1$ of edges is a solution for  \ctok  with length $n+1$.

On the other hand, suppose there is a solution $\s'$ for  \stok  of length at most $n+1$. Since $G$ has no loops, every token starting at $v \in V$ must be moved to $c$ at some point. Moreover, in the last swap we have to bring the token starting at $c$ back to this vertex. Thus every feasible solution uses at least $n+1$ swaps, which implies that the length of  $\s'$ is exactly $n+1$; let $\s' = cv_1,cv_2,\ldots,cv_n,cv_{n+1}$. Moreover, we have $v_1 = v_{n+1}$ and $v_i \neq v_j$ for all $1 \leq i < j \leq n$. Thus we observe that $v_1,v_2,v_3,\ldots,v_n$ is a Hamiltonian cycle in $G$.
\end{proof}

\subsection{Cliques}

If $G$ is a complete graph, then the optimal solution for \tok is $n$ minus the number of cycles in the permutation given by initial positions of tokens \cite{Cayley}. Thus, the problem is solvable in polynomial time. On the other hand, we can show that \ctok  is NP-complete on cliques.
Before we prove it, let us prove an auxiliary lemma.
In the {\sc Directed Triangle Decomposition} we are given a digraph $H=(V,A)$, and we ask whether the arc set $A$ can be decomposed into disjoint directed triangles.

\begin{lemma}\label{lem:decomposition}
{\sc Directed Triangle Decomposition} is \emph{NP}-complete, even if the input digraph $H=(V,A)$ is Eulerian and has no 2-cycles. Moreover, it cannot be solved in $2^{o(|A|)}$, unless the ETH fails.
\end{lemma}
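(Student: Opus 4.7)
The plan is to prove NP-hardness and the ETH lower bound by a linear reduction from the restricted version of \ecover that, as already cited in the paper (\cite{Bodlaender2015,Gonzalez85}), admits no $2^{o(n)}$-time algorithm even when every element of the universe lies in exactly three triples. Hence, given an instance $(U,\mathcal{S})$ with $|U|=3n$, $|\mathcal{S}|=3n$, I would build a digraph $H$ whose arc set decomposes into directed triangles iff $\mathcal{S}$ contains an exact $3$-cover of $U$.

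The construction uses two kinds of local gadgets, glued together along shared arcs. For each triple $S=\{a,b,c\}\in\mathcal{S}$ I would attach a constant-size \emph{triple gadget} carrying three distinguished \emph{interface arcs}, one per element of $S$. The gadget will be designed so that it admits exactly two internally valid triangle decompositions, named ``selected'' and ``unselected'': the ``selected'' decomposition leaves the three interface arcs to be covered by triangles outside the gadget, while the ``unselected'' decomposition consumes them internally. For each element $u\in U$, lying in triples $S_{j_1},S_{j_2},S_{j_3}$, I would add an \emph{element gadget} linking the three corresponding interface arcs; it will be designed so that its arcs decompose into directed triangles iff exactly one of the three incoming interface arcs is free (i.e., exactly one of the three triples containing $u$ is ``selected''). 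Together these two families of constraints force any triangle decomposition of $H$ to encode an exact $3$-cover of $U$, and conversely a cover yields a decomposition.

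To meet the extra conditions demanded by the lemma, each gadget will be laid out symmetrically so that every vertex has equal in-degree and out-degree (making $H$ Eulerian) and no two vertices are joined by a pair of antiparallel arcs; any unwanted antiparallel pair that appears in the natural construction can be broken by subdividing one of the arcs through a short ``triangle buffer'' of constant size. Since each gadget is of constant size and there are $O(n)$ of them in total, $|A(H)|=O(n)$, so a $2^{o(|A(H)|)}$-time algorithm for \textsc{Directed Triangle Decomposition} would solve the restricted \ecover in $2^{o(n)}$ time, contradicting the ETH.

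The main obstacle will be the combinatorial design of the two gadgets, and the verification that no ``rogue'' triangle crossing gadget boundaries can subvert the intended correspondence between decompositions and covers. The standard way to handle this is to endow every interface arc with a local signature (achieved by choosing the in- and out-neighbourhoods of its endpoints to be sufficiently distinctive) so that an exhaustive enumeration of the triangles through that arc leaves only the intended choices; this bookkeeping, rather than any deep idea, is where the proof becomes technical.
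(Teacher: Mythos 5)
There is a genuine gap: your two gadgets carry the entire burden of the argument, yet they are specified only by the behaviour you would like them to have (``exactly two internally valid triangle decompositions,'' ``decomposes iff exactly one interface arc is free''), not actually constructed or verified. Designing arc-disjoint-triangle gadgets that realize a two-state choice and a 1-in-3 constraint, while excluding ``rogue'' triangles across gadget boundaries, is precisely the hard combinatorial core of this kind of hardness proof --- it is what Holyer's paper is devoted to in the undirected case --- so deferring it to ``bookkeeping'' leaves the proof without its main content. The auxiliary fixes are similarly unsubstantiated: subdividing an arc to break an antiparallel pair is not a harmless local operation in a triangle-decomposition instance (each of the two new arcs must itself be covered by some triangle, so the ``buffer'' changes the set of admissible decompositions and must be re-analysed), and it also perturbs in- and out-degrees, so the Eulerian property cannot simply be declared. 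Your overall logic (selected triples free their interface arcs; each element gadget consumes exactly one free arc; constant-size gadgets give $|A|=O(n)$ and hence the ETH transfer from the restricted \ecover) is sound as a plan, but as written it is a reduction template rather than a proof.

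The paper avoids building any gadgets at all: it reduces from \sat by reusing Holyer's NP-hardness construction for partitioning the edge set of an \emph{undirected} graph into triangles, observes (following Colbourn) that Holyer's graph can be made tripartite with classes $A,B,C$, and then orients every edge according to the cyclic pattern $A\to B\to C\to A$. This single step delivers all three extra requirements of the lemma essentially for free: every undirected triangle becomes a directed triangle (so the undirected and directed decomposition problems coincide on this instance), a tripartite cyclic orientation can have no $2$-cycles, and each vertex has equally many neighbours in the two other classes, so the digraph is Eulerian; making gadget sizes proportional to the number of variable occurrences keeps $|A|$ linear in $N+M$ for the ETH bound. If you want to salvage your route, you would need to either exhibit the two gadgets explicitly and enumerate their decompositions, or, more economically, do what the paper does and let an existing undirected construction do the combinatorial work.
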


\begin{proof}
For a given \sat formula $\Phi$ with $N$ variables and $M$ clauses, we will construct a digraph $H$, which can be decomposed into triangles if and only if $\Phi$ is satisfiable.

The main part of the construction is essentially the same as the construction of Holyer~\cite{Holyer81}, used to show NP-hardness of decomposing the edge set of an undirected graph into triangles (or, more generally, $k$-cliques). Thus we will just point out the modifications and refer the reader to the paper of Holyer for a complete description.

We observe that by the proper adjustment of constants the graph $G_3$ constructed by Holyer can be made three-partite (see also Colbourn \cite{COLBOURN198425}). Let $A,B,C$ denote the partition classes. We obtain $H$ by orienting all edges of $G_3$, according to the following pattern $A \to B \to C \to A$. Note that clearly $H$ has no 2-cycles.

Consider a vertex $v$ of $G_3$. Without loss of generality assume $v \in A$. We note that exactly half of the neighbors of $v$ are in $B$, and the other half are in $C$. This implies that $H$ is Eulerian.

We also point out that the number of arcs in $H$ is linear in the number of vertices.
Moreover, if we make the size of each variable gadget proportional to the number of occurrences of this variable in $\Phi$ (instead of proportional to $M$, as in the original proof), we obtain that $|A| = O(N+M)$. This shows that an existence of a subexponential (in $|A|$) algorithm for our problem contradicts the ETH.
\end{proof}

 \begin{theorem}\label{th:cliques-cts}
 On cliques, \ctok remains \emph{NP}-hard and cannot be solved in time $2^{o(n)}$, unless the ETH fails.
 \end{theorem}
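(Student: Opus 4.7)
The plan is to reduce from \textsc{Directed Triangle Decomposition} on Eulerian digraphs without 2-cycles, as provided by Lemma~\ref{lem:decomposition}. Given such an instance $H=(V,A)$, I would use Observation~\ref{obs:eulerian}(ii) to build, in polynomial time, an instance of \ctok on the clique $K_{|A|}$ whose color digraph $G^*$ is isomorphic to $H$. The claim will be that $H$ decomposes into directed triangles if and only if this \ctok instance admits a solution of length at most $2|A|/3$.

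The core technical step is to reinterpret the optimal \ctok cost on a clique in terms of arc-decompositions of $G^*$. On a clique, once destinations are fixed, the classical result of Cayley gives that the optimal number of swaps equals $n$ minus the number of cycles in the resulting permutation $\pi$. Hence the optimal \ctok cost is $n - \max_\pi c(\pi)$, where the maximum is over all permutations $\pi$ compatible with the color constraints. I would then show, as noted after Observation~\ref{obs:eulerian}, that every such $\pi$ corresponds bijectively to an arc-decomposition of $G^*$ into arc-disjoint circuits, with the cycles of $\pi$ matching the circuits (a cycle $v_1 \to v_2 \to \cdots \to v_k \to v_1$ of $\pi$ uses precisely the arcs $e(v_1),\ldots,e(v_k)$, which by definition of the color digraph form a closed walk $c(v_1) \to c(v_2) \to \cdots \to c(v_k) \to c(v_1)$ in $G^*$). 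Thus maximizing $c(\pi)$ is the same as maximizing the number of parts in an arc-decomposition of $G^*$ into circuits.

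Combining this with the structural properties of $H$ finishes the reduction. Since $H$ has no loops and no 2-cycles, every circuit in an arc-decomposition has length at least $3$, so the number of circuits is at most $|A|/3$, with equality precisely when the decomposition consists of directed triangles. Therefore the optimal \ctok cost is at least $|A| - |A|/3 = 2|A|/3$, and equality holds if and only if $H$ admits a directed triangle decomposition. This yields NP-hardness, and since the constructed clique has exactly $n = |A|$ vertices, any $2^{o(n)}$-time algorithm for \ctok on cliques would give a $2^{o(|A|)}$-time algorithm for \textsc{Directed Triangle Decomposition}, contradicting Lemma~\ref{lem:decomposition} and hence the ETH.

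The main obstacle, and the place where I would have to be most careful, is the correspondence between permutations and circuit decompositions: I want to be sure that arbitrary permutations respecting colors, not merely ``nice'' ones, correspond to circuit decompositions with matching cycle counts, and that the absence of 2-cycles and loops in $H$ carries over so as to force the minimum circuit length to be exactly $3$. Once this correspondence is nailed down cleanly, the length bound $2|A|/3$ and the ETH-tight lower bound follow directly from Lemma~\ref{lem:decomposition}.
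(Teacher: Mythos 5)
Your proposal is correct and follows essentially the same route as the paper: a reduction from \textsc{Directed Triangle Decomposition} via Observation~\ref{obs:eulerian}(ii), using Cayley's formula on cliques together with the correspondence between cycles of color-respecting permutations and circuit decompositions of the color digraph, with the same threshold $2|A|/3$ and the same linear-size argument for the ETH bound. The only difference is that you spell out the permutation--circuit correspondence slightly more explicitly than the paper does, which is a welcome clarification rather than a deviation.
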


\begin{proof}
We reduce from {\sc Directed Triangle Decomposition}.
Let $H$ be an Eulerian directed graph with $n$ arcs, having no 2-cycles.
Consider an instance of \ctok  on $G = K_n$, such that $H$ is its color digraph (it exists by Observation \ref{obs:eulerian}(ii)). We claim that there exists a solution for this instance of length at most $2n/3$ if and only if the arc set of $H$ can be decomposed into directed triangles (see Lemma \ref{lem:decomposition}).

Suppose that the arc set of $H$ can be decomposed into $n/3$ triangles. The vertices of $G$ corresponding to the edges of the $i$-th triangle, are $v^i_1,v^i_2,v^i_3$. We construct the solution $\s$ by concatenating sequences $v^i_1v^i_2,v^i_1v^i_3$ for $i=1,2,\ldots,n/3$. It is easy to verify that $\s$ is a solution and its length is $2n/3$.

So now suppose we have a solution $\s$ of length at most $2n/3$. Recall that the length of any solution $\s'$ is at least $n$ minus the number of cycles in the permutation obtained by fixing the destinations of tokens according to $\s'$. Thus the number of cycles in the permutation given by $\s$ is at least $n/3$. Since these cycles correspond to circuits in the color digraph $H$, and $H$ has no 2-cycles, this is only possible if the arcs of $H$ can be decomposed into triangles.
\end{proof}

It is interesting to point out that if $G$ is a clique, then the presence of many cycles in the permutation of tokens yields a short solution for \tok, while for the case when $G$ is a star, the situation is opposite.

Theorems \ref{th:stars-sts} and \ref{th:cliques-cts} can be used to show a slightly more general hardness result.
A class $\cal G$ of graphs is {\em hereditary}, if for any $G \in \cal G$ and any induced subgraph $G'$ of $G$ we have $G' \in \cal G$.
We say that that a class $\cal G$ of graphs has unbounded degree, if for every $d \in \mathbb{N}$ there exists $G \in {\cal G}$, such that $\Delta(G) \geq d$.

\begin{theorem} \label{thm:unbounded-deg}
Let $\cal G$ be a hereditary class containing an infinite number of connected graphs with unbounded degree.
 \stok  is \emph{NP}-complete, when restricted to graphs from $\cal G$. Moreover, if there exists an algorithm solving  \stok  in time $2^{o(n)}$ for every graph in $ G \in {\cal G}$ with $n$ vertices, then the ETH fails.
\end{theorem}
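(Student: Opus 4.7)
The plan is to split on the clique number of the class and reduce either from \stok on cliques (Theorem~\ref{th:cliques-cts}) or on stars (Theorem~\ref{th:stars-sts}), exploiting the fact that both reductions produce instances whose vertex count is linear in the size of the underlying clique/star.

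First I would argue that $\cal G$ contains arbitrarily large cliques or arbitrarily large induced stars. Fix an integer $s$. Since $\cal G$ contains connected graphs of unbounded maximum degree, I can pick $G \in \cal G$ with a vertex $v$ of degree at least $R(s,s)$, where $R(\cdot,\cdot)$ is the Ramsey number. Apply Ramsey's theorem to the subgraph of $G$ induced by $N(v)$: this neighborhood contains either an independent set $I$ of size $s$ or a clique $K$ of size $s$. In the former case, $\{v\} \cup I$ induces a $K_{1,s}$ in $G$, so $K_{1,s} \in \cal G$ by the hereditary property. In the latter case, $K \cup \{v\}$ induces a $K_{s+1}$, so $K_{s+1} \in \cal G$. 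As $s$ was arbitrary, either $\cal G$ contains $K_{1,s}$ for every $s$, or $\cal G$ contains $K_{s}$ for every $s$ (or both).

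Now I handle the two cases. If $\cal G$ contains arbitrarily large cliques, then Theorem~\ref{th:cliques-cts} directly gives NP-hardness and the ETH lower bound for \ctok on those cliques; since \ctok is a special case of \stok and the clique used lies in $\cal G$, the same hardness transfers to \stok restricted to $\cal G$. If instead $\cal G$ contains $K_{1,s}$ for every $s$ but has bounded clique number, I invoke Theorem~\ref{th:stars-sts}, which yields NP-hardness and a $2^{o(n)}$ lower bound under the ETH for \stok on stars; the star produced by the reduction is an element of $\cal G$, so the hardness again transfers.

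The main subtlety to check is the preservation of the ETH lower bound, i.e.\ that the reduced instance has $n = O(s)$ vertices where $s$ measures the size of the source problem driving the hardness. In both Theorem~\ref{th:stars-sts} and Theorem~\ref{th:cliques-cts} this is immediate from the constructions: the star reduction uses $n+1$ vertices from a degree-$2$ digraph, and the clique reduction uses $n$ vertices from an Eulerian digraph with $n$ arcs. Consequently a hypothetical $2^{o(n)}$-time algorithm for \stok on $\cal G$-graphs would, in either case, yield a subexponential algorithm for the source NP-hard problem, contradicting the ETH. Membership in NP is clear since a sequence of at most $O(n^2)$ swaps serves as a certificate, and verifying it takes polynomial time. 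This completes the proof sketch.
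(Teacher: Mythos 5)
Your proposal is correct and follows essentially the same route as the paper: both apply Ramsey's theorem to the neighborhood of a high-degree vertex to conclude (via heredity) that $\cal G$ contains arbitrarily large induced stars or arbitrarily large cliques, and then invoke Theorem~\ref{th:stars-sts} or Theorem~\ref{th:cliques-cts} respectively, noting that the constructed instances have size linear in the source instance so the ETH bound transfers. The only difference is presentational — you establish the star/clique dichotomy up front, while the paper first tests whether $K_{1,n}\in\cal G$ and only then runs the Ramsey argument for the clique case.
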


\begin{proof}
We shall reduce from {\sc Directed Hamiltonian Cycle} in digraphs with out-degree at most 2. Let $H$ be such a digraph with $n$ vertices. 

First, assume that $K_{1,n} \in {\cal G}$. Then we are done by Theorem \ref{th:stars-sts}.
So assume that $K_{1,n} \notin {\cal G}$. Since $\cal G$ is hereditary, we know that $K_{1,n'} \notin {\cal G}$ for any $n' \geq n$.
Since decomposing the arc set of an Eulerian digraph with no 2-cycles into directed triangles is NP-complete (see Lemma \ref{lem:decomposition}), there exists a polynomial reduction from  {\sc Directed Hamiltonian Cycle} to this problem. Consider the digraph $H^*$ obtained with this reduction. Its arc set can be decomposed into triangles if and only if $H$ has a Hamiltonian cycle. Let $m$ denote the number of edges in $H^*$ and set $N = \max(m,n)$.

By Ramsey theorem \cite{Ramsey1987} (see also Erd\H{o}s, Szekeres \cite{Erdos1987}) we know that there exists an absolute constant $c$ such that every graph with more than $c \cdot 4^N$ vertices has either a clique or an independent set of size $N$.

Since $\cal G$ has unbounded degree, there exists a graph $G \in \cal G$, such that $\Delta(G) \geq c \cdot 4^N$. Let $v$ be a vertex of $G$ with degree at least $c \cdot 4^N$ and let $G'$ be a subgraph of $G$ induced by the neighborhood of $v$. 
If $G'$ has an independent set $U$ of size $N$, then $G[U \cup \{v\}] \sim K_{1,N}$, so we obtain a contradiction (recall that $\cal G$ is hereditary). Thus $G'$ has a subset $C$ inducing a clique of size $N$. Since $\cal G$ is hereditary and $N \geq m$, we obtain that $K_m \in \cal G$. Thus we can use the construction from Theorem \ref{th:cliques-cts}.
\end{proof}

\subsection{Paths}

Finally, we turn our attention to paths.
 
 \begin{theorem}\label{th:paths-cts}
 \ctok can be solved in polynomial time on paths.
 \end{theorem}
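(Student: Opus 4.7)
The plan is to reduce the problem to choosing the best color-respecting destination assignment, and then to argue that the optimal assignment is the order-preserving one (match the $i$-th token of color $c$ from the left to the $i$-th vertex of color $c$ from the left). Once the assignment is fixed, we reduce to an instance of \tok on a path, whose optimum is a classical quantity, namely the number of inversions of the corresponding permutation (see Knuth~\cite[Section 5.2.2]{Knuth}). So the whole algorithm consists of reading off the order-preserving assignment in a single left-to-right scan and then counting inversions in $O(n \log n)$ time.

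The main technical step is the exchange argument justifying the choice of the order-preserving assignment. Let $v_1,\ldots,v_n$ be the vertices of the path from left to right, and fix any valid color-respecting bijection $\pi$ from tokens to vertices. For tokens with initial positions $x<y$, say the pair is an \emph{inversion} under $\pi$ if $\pi(t_x)$ is strictly to the right of $\pi(t_y)$; the length of the optimal \tok solution given $\pi$ is exactly this number of inversions. Suppose two tokens $t_a,t_b$ of the same color, with $a<b$, are assigned to $v_j,v_i$ respectively with $i<j$, and let $\pi'$ be obtained from $\pi$ by swapping these two destinations. I would show $\mathrm{inv}(\pi')<\mathrm{inv}(\pi)$ by checking, for any third token $t_c$ at position $c$, the contributions of the pairs involving $t_c$ in $\pi$ and $\pi'$: a straightforward case split on whether $c<a$, $a<c<b$, or $c>b$ (and on the relative position of $\pi(t_c)$ with respect to $v_i,v_j$) shows those contributions never increase, and the inversion $(t_a,t_b)$ itself is removed. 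Iterating this local uncrossing on every same-color pair that is ``crossed'' under $\pi$ terminates at the order-preserving assignment $\pi^*$, which therefore minimizes the number of inversions.

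Combining these two facts, the length of an optimal solution on a path equals $\mathrm{inv}(\pi^*)$, where $\pi^*$ is the unique order-preserving bijection between same-colored tokens and vertices. Both $\pi^*$ and $\mathrm{inv}(\pi^*)$ are computable in polynomial time (in fact in $O(n\log n)$), which proves the theorem. The algorithm can further be made constructive: once $\pi^*$ is fixed, one runs any standard bubble-sort-by-adjacent-transpositions procedure to output an actual sequence of $\mathrm{inv}(\pi^*)$ swaps.

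The main obstacle I expect is writing up the case analysis for the exchange argument cleanly; all three cases ($c$ to the left, between, or to the right of $a,b$) are routine but need to be spelled out, and the ``between'' case is the one where the count can actually strictly decrease, which must be handled carefully to avoid double counting. Everything else (reduction to inversions, polynomial-time computation of $\pi^*$) is standard and short.
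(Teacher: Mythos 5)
Your proposal is correct, and its overall shape matches the paper's: fix the order-preserving color-respecting destination assignment, reduce to \tok on a path, and count inversions. The only real difference is how the key step (optimality of the order-preserving assignment) is justified. You argue at the level of destination assignments, via an uncrossing/exchange argument showing that resolving any ``crossed'' same-color pair never increases, and strictly decreases, the inversion count; this is correct but requires the three-way case analysis you anticipate. The paper instead argues at the level of swap sequences: an optimal solution never swaps two tokens of the same color (such a swap could simply be deleted), hence the relative left-to-right order of same-colored tokens is invariant throughout any optimal solution, which forces the leftmost token of color $c$ to end at the leftmost vertex of color $c$, and so on. That observation yields the order-preserving assignment in two lines and avoids the case analysis entirely, though your exchange argument has the mild advantage of being a self-contained statement about permutations that does not reference the dynamics of the swap sequence. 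Both are complete polynomial-time algorithms.
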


\begin{proof}
Let $c$ be the color of the vertex $v$ at the left end of the path.
Let $t$ be the leftmost token with color $c$.
It is clear that no optimal solution contains a swap involving two tokens of the same color, so in any optimal solution the token $t$ will end up in $v$.
Repeat this argument with the second leftmost vertex, and so on.
This way we fix the destinations for all tokens, obtaining an equivalent instance of \tok, which can be solved in polynomial time (see \cite{MiltzowNORTU16}).
\end{proof}

Now we will discuss the complexity of \stok on paths. We want to point out an equivalent, interesting formulation of this problem. Consider an instance $\cal I$ of  \stok  defined on a path with $n$ vertices $v_1,v_2,\ldots,v_n$. For a vertex $v_i$, let $t_i$ denote the token initially placed on $v_i$, and let $D(t_i)$ denote the set of possible destinations of $t_i$. Now consider a bipartite graph $G$ with bipartition classes $\{v_1,v_2,\ldots,v_n\}$ and $\{t_1,t_2,\ldots,t_n\}$. The edge $t_iv_j$ is present in $G$ if and only if $v_j \in D(t_i)$. Fix two distinct vertical lines $\ell$ and $\ell'$ on a plane and fix the positions of vertices of $G$ on these lines; $v_1,v_2,\ldots,v_n$ lie on $\ell$ (in this ordering from top to bottom), and $t_1,t_2,\ldots,t_n$ lie on $\ell'$ (also in this ordering from top to bottom); see \autoref{fig:bipartite}.

\begin{figure}[htbp]
 \centering
 \includegraphics{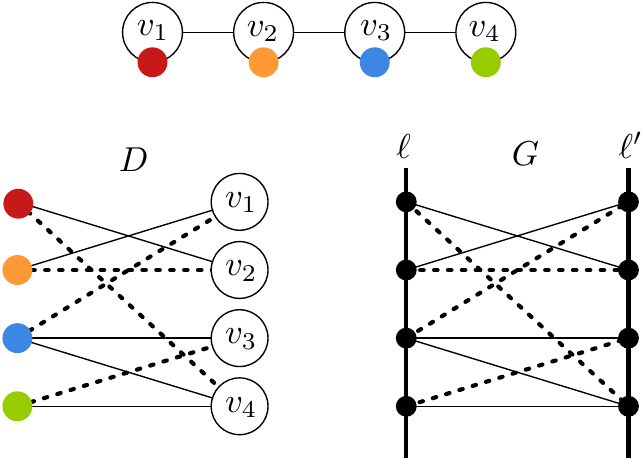}
 \caption{A bipartite graph $G$ constructed from an instance of \stok on a path. 
 }
 \label{fig:bipartite}
\end{figure}

Consider a feasible solution $\s$ of $\cal I$ and let $\sigma$ be the permutation assigning destinations to tokens, according to $\s$.
Since after fixing the destinations we obtain an instance of \tok, which is polynomially solvable on paths, we observe that each feasible solution $\s$ for $\cal I$ corresponds to a perfect matching in $G$ (and vice versa).

Recall that the number of swaps required to solve an instance of \tok on a path is equal to the number of inversions in the initial permutation of tokens. Suppose there is such an inversion in $\sigma$, i.e. $\sigma(t_i) > \sigma(t_j)$ for some $i < j$. Observe that this is exactly equivalent to saying that the edges $t_i\sigma(t_i)$ and $t_j\sigma(t_j)$ of $G$ cross (see \autoref{fig:bipartite}).

So let us formally define the problem \bip, which is equivalent to \stok on a path.
The instance of \bip is $(G,k)$, where $k$ is an integer and $G$ is a bipartite graph with $n$ vertices in each bipartition class. Moreover, the vertices of $G$ are positioned on two parallel lines, one for each bipartition class. We can also assume that $G$ has at least one perfect matching.
The problem asks if $G$ has a perfect matching with at most $k$ pairwise crossing pairs of edges.

The problem \bip (and thus also \stok on paths) was recently shown to be NP-hard by Gu\'{s}piel~\cite{Guspiel}.

\begin{theorem}[Gu\'{s}piel~\cite{Guspiel}] \label{thm:paths}
 \stok  remains NP-hard for paths, even if each token has at most 2 possible destinations, and each vertex is a destination of at most 2 tokens. Moreover, the problem cannot be solved in time $2^{o(n)}$ (where $n$ is the number of vertices of the path), unless the ETH fails.
\end{theorem}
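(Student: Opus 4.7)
The plan is to reduce from a sparse NP-hard Boolean problem to \bip, exploiting the equivalence between \stok on paths and \bip established just above. Since the theorem requires each token to have at most two destinations and each vertex to be the destination of at most two tokens, the bipartite incidence graph $G$ has maximum degree $2$, so every connected component of $G$ is either a path or an even cycle. Paths that admit a perfect matching admit exactly one, so the only real choices when picking a perfect matching of $G$ come from the even cycles, each of which admits exactly two. This two-state structure per cycle component invites a reduction from a Boolean problem with a linear ETH lower bound, such as \textsc{Max-2-Sat} or \textsc{Max-Cut} in cubic graphs, both of which require time $2^{\Omega(N+M)}$ under the ETH.

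The construction would use one bipartite $4$-cycle per variable $x_i$ (resp.\ per vertex of the cubic \textsc{Max-Cut} graph) as a variable gadget, with the two perfect matchings $M_i^0,M_i^1$ of the cycle encoding the two truth values. The four vertices of the gadget lie in a narrow horizontal strip on $\ell$ and $\ell'$, with distinct gadgets placed in disjoint intervals so that crossings between the chosen matchings of two unrelated gadgets contribute a constant count (ideally zero) independent of the assignment. Each clause (resp.\ edge) then becomes a small coupling gadget, realized by interleaving vertices of the two involved variable cycles on $\ell$ and $\ell'$ in a controlled way, so that the number of crossings contributed by the pair $(M_i^a, M_j^b)$ equals a prescribed function $f(a,b)$ encoding the clause's truth value (resp.\ whether the edge is cut). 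Because $4$-cycles have all vertex degrees equal to $2$, the degree restrictions of the theorem are maintained throughout. Setting the threshold $k$ to be the ``perfectly satisfying'' crossing count then ensures that the source instance is satisfiable (resp.\ has a cut of the required size) if and only if $G$ admits a perfect matching with at most $k$ crossings.

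The main obstacle will be the layout design. One must position the gadgets on $\ell$ and $\ell'$ so that (i) for each clause the four local assignments $(a,b)\in\{0,1\}^2$ yield exactly the desired pattern of crossing counts, (ii) crossings between unrelated gadgets sum to a known constant that does not depend on the chosen matchings, and (iii) the overall graph is a disjoint union of $4$-cycles (or admits a consistent padding by dummy matched pairs), so that the max-degree-$2$ constraint is preserved. Condition (ii) is the most delicate, and is typically handled by routing each variable cycle through its own vertical corridor and inserting auxiliary matched vertices that block spurious crossings; condition (i) is a finite case analysis of how two $4$-cycles can be interleaved. Since the overall path has $O(N+M)$ vertices, any $2^{o(n)}$ algorithm for \stok on paths would transfer to a $2^{o(N+M)}$ algorithm for the source problem, contradicting the ETH and yielding the claimed lower bound.
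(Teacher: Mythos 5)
First, a point of reference: the paper does not prove Theorem~\ref{thm:paths} at all --- it is imported from Gu\'{s}piel~\cite{Guspiel}, so there is no in-paper proof to compare yours against. Judged on its own merits, your proposal starts in the right place: the reduction of \stok on paths to \bip, and the observation that in a maximum-degree-$2$ bipartite graph every component is a path (at most one perfect matching) or an even cycle (exactly two), so the binary choices live on the cycles. But the actual construction is deferred rather than carried out, and the two points you defer are exactly where the difficulty of this theorem lies.

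Concretely, there are two gaps. (1) A single $4$-cycle per variable cannot work once a variable has several occurrences. A variable of \textsc{Max-2-Sat} (or a vertex of a cubic \textsc{Max-Cut} instance) must be coupled with up to three clause/edge gadgets sitting at three different positions along $\ell$ and $\ell'$; to interleave with all of them, the four vertices of its cycle would have to be spread across all three locations, and then the ``crossed'' matching of that cycle inevitably crosses the edges of every unrelated gadget lying in between it, by an amount depending on which matchings those gadgets chose. This breaks your condition (ii). The standard remedy is a \emph{long} variable cycle with one pair of edges per occurrence, so that the truth value is propagated locally along the cycle; a $4$-cycle has no room for that, and you would need to redesign the gadget and re-verify (i) and (ii) for it. (2) The one mechanism you offer for the condition you yourself call the most delicate --- ``inserting auxiliary matched vertices that block spurious crossings'' --- cannot work: whether two matching edges cross is determined solely by the relative order of their four endpoints on the two lines, so adding further matched pairs can only \emph{add} crossings, never remove or block one. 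Until concrete variable and coupling gadgets are exhibited, the finite case analysis of (i) is done, and the inter-gadget crossing count is proved assignment-independent, this is a plan for a proof rather than a proof. (The final ETH transfer is fine in principle, since the intended construction has $O(N+M)$ vertices and the source problems do have $2^{o(N+M)}$ lower bounds under the ETH.)
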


This result allows us to generalize \autoref{thm:unbounded-deg} to all hereditary classes of graphs.

\begin{theorem} \label{thm:all-hereditary}
Let $\cal G$ be a hereditary class containing an infinite number of connected graphs.
 \stok  is \emph{NP}-complete, when restricted to graphs from $\cal G$. Moreover, if there exists an algorithm solving  \stok  in time $2^{o(n)}$ for every graph in $ G \in {\cal G}$ with $n$ vertices, then the ETH fails.
\end{theorem}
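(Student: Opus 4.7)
The plan is to split into two cases based on the maximum degree of graphs in $\cal G$. If $\cal G$ has unbounded maximum degree, then Theorem~\ref{thm:unbounded-deg} applies directly and we are done, both for NP-hardness and for the $2^{o(n)}$ ETH lower bound. So the interesting case is when there is a constant $\Delta$ with $\Delta(G)\leqslant \Delta$ for every $G\in{\cal G}$.

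Assume we are in this bounded-degree case. Since $\cal G$ contains infinitely many non-isomorphic connected graphs and there are only finitely many connected graphs of any fixed size, $\cal G$ must contain connected graphs on arbitrarily many vertices. A standard BFS argument shows that a connected graph on $n$ vertices with $\Delta(G)\leqslant \Delta$ has diameter at least $\Omega(\log_\Delta n)$, which tends to infinity as $n$ grows (note $\Delta\geqslant 2$, since otherwise $\cal G$ would contain only $K_1$ and $K_2$). Now the key observation is that any shortest path between two vertices realizing the diameter is an induced path in $G$. Therefore $\cal G$ contains induced paths of unbounded length, and because $\cal G$ is hereditary, it contains $P_n$ as an element for every positive integer $n$.

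With all paths available inside $\cal G$, I would invoke Theorem~\ref{thm:paths} of Gu\'{s}piel, which gives NP-hardness of \stok on paths together with a $2^{o(n)}$ lower bound under the ETH. Since the instances produced by that reduction are paths and hence lie in $\cal G$, and the instance size is linear in the source size, both the NP-hardness and the ETH lower bound transfer verbatim to \stok on $\cal G$.

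I do not expect a real obstacle here: the whole argument is essentially an organization exercise combining Theorem~\ref{thm:unbounded-deg} and Theorem~\ref{thm:paths} via the dichotomy bounded vs.\ unbounded degree. The only non-trivial step is the diameter-to-induced-path reduction, which is classical (shortest paths are induced) and requires no computation.
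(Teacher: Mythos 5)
Your proposal is correct and follows essentially the same route as the paper: split on whether $\cal G$ has bounded maximum degree, handle the unbounded case by Theorem~\ref{thm:unbounded-deg}, and in the bounded-degree case extract arbitrarily long induced paths (the paper via BFS layers, you via the equivalent observation that a diameter-realizing shortest path is induced) so that Theorem~\ref{thm:paths} applies. No gaps.
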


\begin{proof}
If $\cal G$ has unbounded degree, then the claim holds by \autoref{thm:unbounded-deg}. On the other hand, if there is a constant $d$, such that $\Delta(G) \leq d$ for all $G \in {\cal G}$, then  $\cal G$ contains all paths.
Indeed, let $n$ be an integer and let $G \in {\cal G}$ be a graph with at least $n \cdot d^n$ vertices (it always exists, since $\cal G$ is infinite). Run a BFS algorithm on $G$, starting from an arbitrary vertex, and consider the obtained BFS-layers. The number of such layers is at least $n$, so $G$ contains $P_n$ as an induced subgraph. Since $\cal G$ is hereditary, we have $P_n \in {\cal G}$. The claim follows by \autoref{thm:paths}.
\end{proof}

\section{Conclusion}\label{sec:conclusion}
We conclude the paper with several ideas for further research.
First, we believe that it would be interesting to fill the missing entries in Table  \ref{conc:summary2}. In particular, we conjecture that \tok remains NP-complete even if the input graph is a tree.

Another interesting problem is the following. 
By Miltzow {\em et al.} \cite[Theorem~$1$]{MiltzowNORTU16} (see also Proposition \ref{prop:exact}), \tok can be solved in time $2^{O(n \log n)}$, and there is no $2^{o(n)}$ algorithm, unless the ETH fails.
We conjecture that the lower bound can be improved to $2^{o(n \log n)}$.
It would also be interesting to find single-exponential algorithms for some restricted graph classes, such as graphs with bounded treewidth or planar graphs.

Finally, to prove Corollary \ref{cor:param-k}, we use the powerful and very general meta-theorem by Grohe, Kreutzer, and Siebertz~\cite{Grohe2014}. It would be interesting to obtain elementary FPT algorithms for planar graphs and graph with bounded treewidth (or even trees), just as we did for graphs with bounded degree.

\end{document}